\documentclass[11pt]{article}

\usepackage[T1]{fontenc}
\usepackage[utf8]{inputenc}
\usepackage[margin=1in]{geometry}
\usepackage{amsmath,amssymb,amsthm,mathtools,bm}
\usepackage{enumitem}
\usepackage{natbib}
\usepackage{microtype}
\usepackage[colorlinks=true,allcolors=blue]{hyperref}

\numberwithin{equation}{section}

\theoremstyle{plain}
\newtheorem{theorem}{Theorem}[section]
\newtheorem{proposition}[theorem]{Proposition}
\newtheorem{lemma}[theorem]{Lemma}
\newtheorem{corollary}[theorem]{Corollary}

\theoremstyle{definition}
\newtheorem{definition}[theorem]{Definition}
\newtheorem{example}[theorem]{Example}

\theoremstyle{remark}
\newtheorem{remark}[theorem]{Remark}

\newcommand{\R}{\mathbb{R}}
\newcommand{\1}{\mathbf{1}}
\newcommand{\T}{\mathcal{T}}
\newcommand{\Xdep}{\mathcal{X}}
\newcommand{\diag}{\operatorname{diag}}
\newcommand{\supp}{\operatorname{supp}}

\newcommand{\Tr}{\operatorname{Tr}}
\newcommand{\norm}[1]{\left\lVert #1 \right\rVert}
\newcommand{\inner}[2]{\left\langle #1, #2 \right\rangle}

\title{A Unified Theory of Ownership Concentration, Overlap, and Dependence\\[0.4em]
\large Margins, Cells, Spectral Structure, and Linear Transmission}
\author{Miquel Noguer i Alonso \and Iro Tasitsiomi}
\date{\today}

\begin{document}
\maketitle

\begin{abstract}
Ownership concentration is not a scalar. For a normalized investor--stock matrix $A$, it has three irreducible layers: concentration across investors, concentration across stocks, and dependence in the joint assignment of investors to stocks. This paper develops a unified quadratic framework for those layers and shows that the same residual operator that measures static overlap also governs linearized market transmission. Raw micro concentration $M(A)=\sum_{i,j}A_{ij}^2$ admits exact row and column decompositions, support bounds, and fixed-marginal extremal characterizations on the transportation polytope. Benchmark-adjusted dependence
\[
\Xdep(A)=\sum_{i,j}\frac{(A_{ij}-p_i s_j)^2}{p_i s_j}
\]
admits two exact decompositions: it is a size-weighted average of investor-level deviations from the market portfolio and, symmetrically, of stock-level deviations from the investor base. The paper also proves a multiscale aggregation law: under any partition of investors, total dependence splits exactly into between-group dependence and within-group heterogeneity. Spectrally, $\Xdep(A)$ equals the sum of squared nontrivial singular values of the whitened matrix $D_p^{-1/2} A D_s^{-1/2}$. The residual operator $L$ then yields two dynamic consequences: idiosyncratic fire-sale vulnerability is bounded by the dominant overlap mode $\rho(A)$, while aggregate benchmark-relative alpha variance has worst-case capacity $\rho(A)^2$ and isotropic average-case capacity $\Xdep(A)$. The fixed-marginal geometry also motivates a feasible-range sparsity score that benchmarks observed micro concentration against the sharp minimum and maximum implied by the marginals. The resulting framework separates scale concentration, feasible sparsity, overlap, and linear transmission in a way that is mathematically transparent and empirically usable for work on crowding, fragility, and systemic risk.
\end{abstract}

\section{Introduction}

Diversification has been central to modern portfolio theory since \citet{markowitz1952}. At the same time, Herfindahl--Hirschman concentration measures remain the standard way to summarize how economic mass is distributed across firms, sectors, or agents \citep{herfindahl1950,hirschman1945}. In ownership data, however, concentration is inherently two-sided: there is a distribution across investors and a distribution across stocks. A market can be concentrated on one margin and diffuse on the other. Once that is recognized, the natural primitive is no longer a vector, but a nonnegative matrix.

That shift matters economically as well as mathematically. Research on stock fragility shows that concentrated ownership and common ownership shocks are distinct channels of vulnerability \citep{greenwoodthesmar2011}. Related work on overlapping portfolios emphasizes that common asset holdings create indirect network linkages that can amplify stress \citep{greenwoodlandierthesmar2015,gualdi2016,poledna2021}. Recent evidence from bond mutual funds points in the same direction: ownership concentration is associated with higher asset-price volatility \citep{huangwangwang2024}. Yet many empirical summaries still collapse ownership to one side of the system, losing the joint structure of \emph{who owns what}.

This paper develops a compact mathematical theory of that joint structure. The central claim is that ownership concentration has three distinct layers:
\begin{enumerate}[leftmargin=2em]
    \item \emph{investor concentration}, captured by the Herfindahl index of row marginals;
    \item \emph{stock concentration}, captured by the Herfindahl index of column marginals;
    \item \emph{ownership dependence}, captured by departures of the full matrix from the proportional benchmark $ps^\top$.
\end{enumerate}
Raw cell concentration $M(A)=\sum_{i,j}A_{ij}^2$ is informative, but by itself it mixes marginal scale and assignment structure. The paper separates those components explicitly.

The contribution of this study sits at the intersection of several established findings. Fixed-margin ownership sets are transportation polytopes and contingency-table spaces \citep{kleewitzgall1968,diaconisgangolli1995,deloerakim2014}. The normalized spectral representation is the correspondence-analysis decomposition of a contingency table \citep{greenacre1984,greenacre2007}. Those ingredients are classical in their own domains. The contribution here is to combine them into an ownership calculus, and to push that calculus one step further through new aggregation, comparative-static, and transmission results natural to the ownership setting.

The main contributions are as follows.
\begin{enumerate}[leftmargin=2em]
    \item The paper defines a three-layer measurement architecture for ownership systems: investor concentration, stock concentration, and benchmark-adjusted dependence.
    \item It proves exact decompositions of raw micro concentration $M(A)$ into size-weighted within-investor and within-stock concentration measures.
    \item It proves exact decompositions of benchmark-adjusted dependence $\Xdep(A)$ into investor-level and stock-level $\chi^2$ deviations from the market benchmark.
    \item It proves a multiscale aggregation theorem: any investor partition yields an exact between-group/within-group decomposition of $\Xdep(A)$, with investor mergers as a sharp corollary.
    \item It studies the fixed-marginal geometry of ownership matrices as a transportation polytope, characterizing the minimizers and maximizers of $M(A)$ and introducing a feasible-range sparsity score.
    \item It shows that the entire nontrivial dependence structure is encoded in the singular spectrum of the whitened ownership matrix.
    \item It proves a dynamic bridge from static overlap to linear transmission: $\rho(A)$ is the sharp worst-case coefficient for idiosyncratic fire-sale propagation, while $\Xdep(A)$ is the isotropic average-case capacity for benchmark-relative active variance.
\end{enumerate}

Three features are especially important for originality. First, the paper does not simply import correspondence analysis into finance; it identifies the ownership meaning of the nontrivial singular modes and places them alongside separate marginal concentration measures. Second, the aggregation law for $\Xdep(A)$ reveals a genuinely multiscale structure: coarse ownership data retain the between-group component exactly while discarding a precisely quantified within-group heterogeneity term. Third, the same residual operator $L$ governs linear transmission in two dual directions: idiosyncratic investor liquidation shocks pass through $L^\top$, while centered stock-return dispersion passes through $L$. This turns a static overlap index into a dynamic capacity theory.

The broader lesson is simple but important. Ownership concentration is not a single number. It is a layered property of a bipartite system, and any serious analysis of crowding or fragility should keep those layers distinct.

\section{Setup: the ownership matrix and its marginals}\label{sec:setup}

Consider $n$ investors and $m$ stocks. Let
\[
Y=(Y_{ij})\in\R_+^{n\times m}
\]
denote the raw holdings matrix, where $Y_{ij}\ge 0$ is the dollar amount or exposure of investor $i$ to stock $j$. Let total represented wealth be
\[
T=\sum_{i=1}^n\sum_{j=1}^m Y_{ij}>0,
\]
and define the normalized matrix
\[
A_{ij}=\frac{Y_{ij}}{T},
\qquad
\sum_{i=1}^n\sum_{j=1}^m A_{ij}=1.
\]
Thus $A$ is a probability matrix on investor--stock pairs.

Define the row and column marginals by
\[
p_i=\sum_{j=1}^m A_{ij},
\qquad
s_j=\sum_{i=1}^n A_{ij}.
\]
Then $p$ is the investor-size distribution and $s$ is the stock-size distribution.

\begin{definition}[Marginal concentration]
The \emph{investor concentration} and \emph{stock concentration} are
\[
H_I=\sum_{i=1}^n p_i^2,
\qquad
H_S=\sum_{j=1}^m s_j^2.
\]
Their inverses
\[
N_I=\frac{1}{H_I},
\qquad
N_S=\frac{1}{H_S}
\]
are the corresponding \emph{effective numbers} of investors and stocks.
\end{definition}

If wealth is equally split across $q$ investors, then $H_I=1/q$ and $N_I=q$. If capitalization is equally split across $r$ stocks, then $H_S=1/r$ and $N_S=r$.

\begin{definition}[Raw micro concentration]
The \emph{raw micro concentration} of the ownership matrix is
\[
M(A)=\sum_{i=1}^n\sum_{j=1}^m A_{ij}^2=\norm{A}_F^2=\Tr(A^\top A).
\]
Its inverse
\[
N_M=\frac{1}{M(A)}
\]
is the quadratic effective number of occupied investor--stock cells.
\end{definition}

\begin{remark}[Collision probabilities]
If two ownership dollars are drawn independently from $A$, then
\[
H_I=\mathbb{P}(\text{same investor}),
\qquad
H_S=\mathbb{P}(\text{same stock}),
\qquad
M(A)=\mathbb{P}(\text{same investor and same stock}).
\]
Thus $H_I$ and $H_S$ are marginal collision probabilities, while $M(A)$ is the joint collision probability.
\end{remark}

\section{Raw micro concentration: exact decompositions and support bounds}\label{sec:micro}

For every investor with $p_i>0$, define the within-investor portfolio weights
\[
q_{ij}=\frac{A_{ij}}{p_i},
\qquad
\sum_{j=1}^m q_{ij}=1,
\]
and let
\[
c_i=\sum_{j=1}^m q_{ij}^2
\]
denote investor $i$'s within-portfolio concentration. For every stock with $s_j>0$, define the within-stock owner shares
\[
r_{ij}=\frac{A_{ij}}{s_j},
\qquad
\sum_{i=1}^n r_{ij}=1,
\]
and let
\[
d_j=\sum_{i=1}^n r_{ij}^2
\]
denote stock $j$'s within-owner concentration.

\begin{proposition}[Two exact decompositions]\label{prop:microdecomp}
The raw micro concentration satisfies
\[
M(A)=\sum_{i:p_i>0} p_i^2 c_i
      =\sum_{j:s_j>0} s_j^2 d_j.
\]
\end{proposition}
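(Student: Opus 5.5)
The plan is a direct computation: regroup the double sum defining $M(A)$ by rows for the first identity and by columns for the second. The only care needed is with zero marginals. Because $A$ is nonnegative, a row with $p_i=0$ consists entirely of zeros. So every $i$ with $p_i=0$ satisfies $A_{ij}=0$ for all $j$, and such rows contribute nothing to $M(A)$. The same holds for columns with $s_j=0$. This justifies restricting the sums to $\{i:p_i>0\}$ and $\{j:s_j>0\}$, where the weights $q_{ij}$, $r_{ij}$, $c_i$, $d_j$ are defined.

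For the row identity, I will first write
\[
M(A)=\sum_{i:p_i>0}\sum_{j=1}^m A_{ij}^2 .
\]
On each surviving row I substitute $A_{ij}=p_i q_{ij}$, so that $A_{ij}^2=p_i^2q_{ij}^2$. Pulling out the factor $p_i^2$, the inner sum becomes $\sum_j q_{ij}^2=c_i$, which gives $M(A)=\sum_{i:p_i>0}p_i^2c_i$.

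For the column identity, I will interchange the order of summation, which is legitimate because the sums are finite. I then drop the columns with $s_j=0$ and substitute $A_{ij}=s_j r_{ij}$. The same factorization gives $\sum_{j:s_j>0}s_j^2 d_j$.

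There is no genuine obstacle here. The only point needing attention is the bookkeeping for zero marginals: nonnegativity of $A$ forces those rows and columns to vanish identically, so the restricted index sets lose nothing. Finally, I can note the probabilistic reading from the collision-probability remark: $M(A)$ is $\mathbb{P}(\text{same investor})$ times $\mathbb{P}(\text{same stock}\mid\text{same investor})$, and conditioning on the stock instead gives the second form.
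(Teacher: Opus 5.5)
Your proposal is correct and matches the paper's proof: substitute $A_{ij}=p_iq_{ij}$ (respectively $A_{ij}=s_jr_{ij}$) and regroup the finite double sum. Your explicit observation that nonnegativity forces rows and columns with zero marginal to vanish identically, which justifies restricting the index sets, is a step the paper uses without stating.
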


\begin{proof}
By substitution,
\[
\sum_{i:p_i>0} p_i^2 c_i
=\sum_{i:p_i>0} p_i^2\sum_{j=1}^m\left(\frac{A_{ij}}{p_i}\right)^2
=\sum_{i,j}A_{ij}^2
=M(A).
\]
The stock-side identity follows analogously.
\end{proof}

Proposition \ref{prop:microdecomp} is the first key identity of the paper. It shows that raw cell concentration is both a size-weighted average of investor specialization and a size-weighted average of stock-owner concentration.

Let
\[
k_i=\#\{j:A_{ij}>0\},
\qquad
\ell_j=\#\{i:A_{ij}>0\}
\]
denote the row and column support sizes.

\begin{lemma}[Local support bounds]
For every investor with $p_i>0$,
\[
\frac{1}{k_i}\le c_i\le 1,
\]
and for every stock with $s_j>0$,
\[
\frac{1}{\ell_j}\le d_j\le 1.
\]
\end{lemma}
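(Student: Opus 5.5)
The plan is to handle the investor-side bounds directly and then obtain the stock-side bounds by the same argument applied to the transpose $A^\top$. Both halves reduce to the same elementary fact: if a probability vector has exactly $k$ positive entries, its sum of squares lies in $[1/k,1]$.

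Fix an investor $i$ with $p_i>0$, and let $S_i=\{j:A_{ij}>0\}$, so $|S_i|=k_i$. Since $p_i>0$, at least one entry of row $i$ is positive, so $k_i\ge 1$ and $1/k_i$ is well defined. The weights $q_{ij}=A_{ij}/p_i$ are nonnegative, vanish off $S_i$, and sum to one over $S_i$.

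For the \emph{upper bound}, I will use $0\le q_{ij}\le 1$, which gives $q_{ij}^2\le q_{ij}$. Summing over $j$ then yields
\[
c_i\le \sum_j q_{ij}=1.
\]

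For the \emph{lower bound}, I will apply Cauchy--Schwarz to the vectors $(q_{ij})_{j\in S_i}$ and $(1)_{j\in S_i}$:
\[
1=\Bigl(\sum_{j\in S_i} q_{ij}\Bigr)^2\le k_i\sum_{j\in S_i} q_{ij}^2=k_i c_i .
\]
Dividing by $k_i$ gives $c_i\ge 1/k_i$. This step requires restricting the sum to the support, since that is where $k_i$ enters. Summing over all $m$ columns would only give the weaker bound $c_i\ge 1/m$.

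For stocks, the argument is identical with $r_{ij}=A_{ij}/s_j$ in place of $q_{ij}$, and with the column support $\{i:A_{ij}>0\}$ of size $\ell_j$ in place of $S_i$. This gives $1/\ell_j\le d_j\le 1$.

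There is no real obstacle here. The only points needing care are to note $k_i\ge1$ and $\ell_j\ge1$, which follow from positivity of the marginal, and to restrict the Cauchy--Schwarz step to the support. Optionally, I will record the equality cases. The upper bound is attained exactly when $k_i=1$, i.e.\ the investor holds a single stock. The lower bound is attained exactly when the weights are uniform on the support, which follows from the equality condition in Cauchy--Schwarz.
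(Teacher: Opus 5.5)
Your proof is correct and follows the paper's argument. Both restrict the profile $q_i$ (respectively $r_j$) to its support of size $k_i$ (respectively $\ell_j$) and use the fact that the squared norm of a probability vector of length $k$ lies between the uniform value $1/k$ and the point-mass value $1$; you supply the explicit Cauchy--Schwarz and $q_{ij}^2\le q_{ij}$ justifications, and the check $k_i,\ell_j\ge 1$, which the paper leaves implicit.
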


\begin{proof}
The vector $(q_{ij})_{j:A_{ij}>0}$ is a probability vector of length $k_i$. Its squared norm is minimized by the uniform allocation and maximized by a point mass. The same reasoning applies to $(r_{ij})_{i:A_{ij}>0}$.
\end{proof}

\begin{theorem}[Global bounds]\label{thm:globalmicro}
The raw micro concentration satisfies
\[
\sum_{i:p_i>0}\frac{p_i^2}{k_i}\le M(A)\le H_I,
\qquad
\sum_{j:s_j>0}\frac{s_j^2}{\ell_j}\le M(A)\le H_S.
\]
Consequently,
\[
\max\!\left\{\frac{H_I}{m},\frac{H_S}{n}\right\}\le M(A)\le \min\{H_I,H_S\}.
\]
\end{theorem}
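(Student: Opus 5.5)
The plan is to combine Proposition~\ref{prop:microdecomp} with the Local support bounds lemma term by term, and then pass to the coarser global bounds using crude estimates on the support sizes.

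\textbf{Row-side bounds.} Start from the row decomposition $M(A)=\sum_{i:p_i>0}p_i^2c_i$. Each investor with $p_i>0$ satisfies $1/k_i\le c_i\le 1$. Multiplying by $p_i^2\ge 0$ and summing over these investors gives
\[
\sum_{i:p_i>0}\frac{p_i^2}{k_i}\le M(A)\le\sum_{i:p_i>0}p_i^2 .
\]
Investors with $p_i=0$ contribute nothing to $H_I$, so the right-hand side equals $H_I$.

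\textbf{Column-side bounds.} The stock-side chain follows in exactly the same way. Use the column decomposition $M(A)=\sum_{j:s_j>0}s_j^2d_j$ together with $1/\ell_j\le d_j\le 1$.

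\textbf{Consequences.} Taking both upper bounds at once gives $M(A)\le\min\{H_I,H_S\}$. For the lower bounds, note that $k_i\le m$ for every $i$, since a row has only $m$ entries. Hence
\[
\sum_{i:p_i>0}\frac{p_i^2}{k_i}\ge\frac{1}{m}\sum_{i:p_i>0}p_i^2=\frac{H_I}{m}.
\]
Symmetrically, $\ell_j\le n$ gives $\sum_{j:s_j>0}s_j^2/\ell_j\ge H_S/n$. Combining the two lower bounds yields $M(A)\ge\max\{H_I/m,\,H_S/n\}$.

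There is no real obstacle here. The only point needing care is that $k_i\ge 1$ whenever $p_i>0$, so that $1/k_i$ is well defined. This holds because a positive row sum forces at least one positive entry, and similarly $\ell_j\ge 1$ whenever $s_j>0$.
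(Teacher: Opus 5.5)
Your proof is correct and matches the paper's argument: insert the local support bounds $1/k_i\le c_i\le 1$ and $1/\ell_j\le d_j\le 1$ into the two decompositions of Proposition~\ref{prop:microdecomp}, then use $k_i\le m$ and $\ell_j\le n$ to get the coarse bounds. Your check that $k_i,\ell_j\ge 1$ on active rows and columns is a small detail that the paper leaves implicit.
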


\begin{proof}
Apply Proposition \ref{prop:microdecomp} together with the local bounds on $c_i$ and $d_j$. The coarse bounds follow from $k_i\le m$ and $\ell_j\le n$.
\end{proof}

\begin{corollary}[Uniform support caps]
If each investor holds at most $K$ stocks and each stock is held by at most $L$ investors, then
\[
\frac{H_I}{K}\le M(A)\le H_I,
\qquad
\frac{H_S}{L}\le M(A)\le H_S.
\]
Equivalently,
\[
N_I\le N_M\le K N_I,
\qquad
N_S\le N_M\le L N_S.
\]
\end{corollary}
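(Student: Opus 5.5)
The plan is to specialize Theorem \ref{thm:globalmicro}, or more directly Proposition \ref{prop:microdecomp} together with the local support bounds, to the case of uniform caps on support sizes. The upper bounds $M(A)\le H_I$ and $M(A)\le H_S$ are already part of Theorem \ref{thm:globalmicro}, so only the lower bounds need an argument. Each then follows from replacing $k_i$ and $\ell_j$ by the common caps $K$ and $L$.

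For the investor side, every investor with $p_i>0$ satisfies $1\le k_i\le K$ by hypothesis. The local support bound therefore gives $c_i\ge 1/k_i\ge 1/K$. Substituting into $M(A)=\sum_{i:p_i>0}p_i^2 c_i$ yields
\[
M(A)\ge \frac{1}{K}\sum_{i:p_i>0}p_i^2=\frac{H_I}{K},
\]
where the final equality uses that investors with $p_i=0$ contribute nothing to $H_I$. The stock side is identical: use $\ell_j\le L$, $d_j\ge 1/L$, and the column decomposition to get $M(A)\ge H_S/L$.

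The equivalent effective-number statements come from taking reciprocals. All of $M(A)$, $H_I$ and $H_S$ are strictly positive, since $A$ is a probability matrix with at least one positive entry. Inverting $H_I/K\le M(A)\le H_I$ reverses the inequalities and gives $1/H_I\le 1/M(A)\le K/H_I$, which is $N_I\le N_M\le KN_I$. The same inversion on the stock side gives $N_S\le N_M\le LN_S$.

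There is no real obstacle here. The only points needing care are two pieces of bookkeeping. First, the sums in Proposition \ref{prop:microdecomp} range only over investors and stocks with positive mass. Second, the caps are only used on those indices, where $k_i\ge 1$ and $\ell_j\ge 1$, so that $1/k_i$ and $1/\ell_j$ are well defined.
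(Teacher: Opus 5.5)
Your proof is correct and takes the same route as the paper: the corollary is an immediate specialization of Theorem \ref{thm:globalmicro} (equivalently, Proposition \ref{prop:microdecomp} plus the local support bounds) using $k_i\le K$ and $\ell_j\le L$. The effective-number forms then follow by taking reciprocals of positive quantities.
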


The point is already clear at this stage. The raw matrix quantity $M(A)$ is not redundant relative to $H_I$ and $H_S$, but it is not a pure dependence measure either. It blends marginal size and assignment structure. The next section isolates the dependence component.

\section{Benchmark-adjusted ownership dependence}\label{sec:dependence}

A natural benchmark is \emph{proportional ownership}: every investor holds the market portfolio and differs only in scale. In matrix form,
\[
A=ps^\top,
\qquad
A_{ij}=p_i s_j.
\]
Under this benchmark,
\[
M(ps^\top)=\left(\sum_i p_i^2\right)\left(\sum_j s_j^2\right)=H_I H_S.
\]
Example \ref{ex:notmin} below shows, however, that $ps^\top$ is not generally the fixed-marginal minimizer of $M(A)$. It is a no-dependence benchmark rather than a universal minimum-concentration benchmark.

\begin{definition}[Ownership dependence index]
Assume from now on that the active rows and columns satisfy $p_i>0$ and $s_j>0$. Define
\[
\Xdep(A)=\sum_{i=1}^n\sum_{j=1}^m \frac{(A_{ij}-p_i s_j)^2}{p_i s_j}.
\]
\end{definition}

\begin{remark}
The index $\Xdep(A)$ is the Pearson $\chi^2$-divergence between the joint law $A$ and the product law $p\otimes s$ \citep{pearson1900}. It vanishes exactly when every investor is a scaled copy of the market portfolio.
\end{remark}

\begin{definition}[Investor and stock profile divergences]
For any probability vectors $u$ and $v$ on the same active support, define
\[
\chi^2(u\|v)=\sum_k \frac{(u_k-v_k)^2}{v_k}.
\]
For investor $i$, the row profile is $q_i=(q_{i1},\dots,q_{im})$, and for stock $j$, the owner profile is $r_j=(r_{1j},\dots,r_{nj})$.
\end{definition}

\begin{proposition}[Equivalent forms and profile decompositions]\label{prop:profiledecomp}
The dependence index satisfies
\[
\Xdep(A)=\sum_{i,j}\frac{A_{ij}^2}{p_i s_j}-1.
\]
Equivalently, if
\[
R_{ij}=\frac{A_{ij}}{p_i s_j},
\]
then
\[
\Xdep(A)=\sum_{i,j}p_i s_j (R_{ij}-1)^2.
\]
Most importantly,
\[
\Xdep(A)=\sum_{i=1}^n p_i\,\chi^2(q_i\|s)
       =\sum_{j=1}^m s_j\,\chi^2(r_j\|p).
\]
\end{proposition}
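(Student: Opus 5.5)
The plan is to expand the square in the definition of $\Xdep(A)$ and then reorganize the resulting sums row by row and column by column. Throughout I use the standing assumption $p_i>0$, $s_j>0$, together with $\sum_i p_i=\sum_j s_j=\sum_{i,j}A_{ij}=1$.

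\textbf{First identity.} Expanding the square gives
\[
\frac{(A_{ij}-p_is_j)^2}{p_is_j}=\frac{A_{ij}^2}{p_is_j}-2A_{ij}+p_is_j .
\]
Summing over all cells, the middle term contributes $-2\sum_{i,j}A_{ij}=-2$. The last term contributes $\sum_i p_i\sum_j s_j=1$. This yields $\Xdep(A)=\sum_{i,j}A_{ij}^2/(p_is_j)-1$.

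\textbf{Second identity.} Substitute $A_{ij}=p_is_jR_{ij}$ directly into the definition. Then
\[
(A_{ij}-p_is_j)^2/(p_is_j)=p_is_j(R_{ij}-1)^2 ,
\]
so the weighted form is immediate.

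\textbf{Profile decompositions.} These are the main content, and the approach is to group the defining double sum by rows.

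1. For a fixed investor $i$, use $A_{ij}=p_iq_{ij}$ to write $A_{ij}-p_is_j=p_i(q_{ij}-s_j)$.

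2. The $i$-th row of the sum then equals
\[
\sum_j \frac{p_i^2(q_{ij}-s_j)^2}{p_is_j}=p_i\sum_j\frac{(q_{ij}-s_j)^2}{s_j}=p_i\,\chi^2(q_i\|s).
\]

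3. Summing over $i$ gives the investor-side decomposition.

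4. The stock side is the mirror argument. Use $A_{ij}=s_jr_{ij}$, so that $A_{ij}-p_is_j=s_j(r_{ij}-p_i)$. The $j$-th column of the sum then equals $s_j\,\chi^2(r_j\|p)$, and summing over $j$ gives the second decomposition.

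\textbf{Main point requiring care.} The only genuine issue is the support convention in the definition of $\chi^2(u\|v)$. The divergence there is stated for probability vectors "on the same active support." However, $q_i$ may vanish at stocks where $s_j>0$, so the supports of $q_i$ and $s$ need not coincide.

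I will interpret the sum as running over the support of the reference vector $v$, that is, all $j$ with $s_j>0$ (respectively all $i$ with $p_i>0$). Entries where $q_{ij}=0$ then contribute $s_j$ each, which is exactly what the cell-by-cell computation produces. With that reading, every term is finite and the identities hold cell by cell. No inequality or earlier result is needed beyond the normalizations of $p$, $s$, $q_i$ and $r_j$.
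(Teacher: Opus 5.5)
Your proposal is correct and follows essentially the same route as the paper: expand the square and use the unit-mass normalizations, substitute $A_{ij}=p_i s_j R_{ij}$, and regroup row by row via $A_{ij}=p_i q_{ij}$ (and column by column via $A_{ij}=s_j r_{ij}$). Your remark that $\chi^2(q_i\|s)$ must be read as a sum over the full support of the reference vector $s$, so that zero cells contribute $s_j$, is a useful clarification that the paper leaves implicit.
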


\begin{proof}
Expanding the square yields
\[
\Xdep(A)
=\sum_{i,j}\frac{A_{ij}^2}{p_i s_j}
 -2\sum_{i,j}A_{ij}
 +\sum_{i,j}p_i s_j
=\sum_{i,j}\frac{A_{ij}^2}{p_i s_j}-1.
\]
The likelihood-ratio form follows immediately from $A_{ij}=p_i s_j R_{ij}$.

For the row decomposition, write $A_{ij}=p_i q_{ij}$:
\[
\Xdep(A)
=\sum_{i,j}\frac{(p_i q_{ij}-p_i s_j)^2}{p_i s_j}
=\sum_i p_i \sum_j \frac{(q_{ij}-s_j)^2}{s_j}
=\sum_i p_i \chi^2(q_i\|s).
\]
The column formula follows analogously.
\end{proof}

Proposition \ref{prop:profiledecomp} is the dependence analogue of Proposition \ref{prop:microdecomp}. It shows that $\Xdep(A)$ is a size-weighted average of investor deviations from the market portfolio and also a size-weighted average of stock deviations from the market investor base.

\begin{remark}[Benchmark-adjusted, not margin-free]
The index $\Xdep(A)$ adjusts for the proportional benchmark $ps^\top$, but it is not literally margin-free. The weights $1/(p_i s_j)$ are part of the geometry. A better description is therefore \emph{benchmark-adjusted dependence}: $\Xdep(A)$ removes the rank-one market-weight component while respecting the observed marginals.
\end{remark}

\subsection{A multiscale aggregation law}

The dependence index admits an exact hierarchical decomposition under investor aggregation.

\begin{theorem}[Multiscale aggregation of dependence]\label{thm:multiscale}
Let $\Pi=\{G_1,\dots,G_g\}$ be a partition of the investor set $\{1,\dots,n\}$. For each group $G_a$, define
\[
P_a=\sum_{i\in G_a} p_i,
\qquad
\bar q_a=\frac{1}{P_a}\sum_{i\in G_a} p_i q_i.
\]
Then
\[
\Xdep(A)
=
\underbrace{\sum_{a=1}^g P_a\,\chi^2(\bar q_a\|s)}_{\text{between-group dependence}}
+
\underbrace{\sum_{a=1}^g \sum_{i\in G_a} p_i \sum_{j=1}^m \frac{(q_{ij}-\bar q_{aj})^2}{s_j}}_{\text{within-group heterogeneity}}.
\]
In particular, investor aggregation can only decrease benchmark-adjusted dependence.
\end{theorem}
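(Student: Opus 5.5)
The plan is to start from the row decomposition in Proposition~\ref{prop:profiledecomp},
\[
\Xdep(A)=\sum_{i=1}^n p_i\,\chi^2(q_i\|s)=\sum_{i=1}^n p_i\sum_{j=1}^m \frac{(q_{ij}-s_j)^2}{s_j},
\]
and to treat it, for each fixed stock $j$, as a weighted sum of squares of the scalars $q_{ij}$ about the reference value $s_j$, with weights $p_i/s_j$. Grouping the investor sum by blocks of the partition $\Pi$, the whole statement becomes the classical weighted analysis-of-variance (parallel-axis) identity applied stock by stock and group by group.

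Concretely, fix a group $G_a$ and a stock $j$. I will write $q_{ij}-s_j=(q_{ij}-\bar q_{aj})+(\bar q_{aj}-s_j)$ and expand the square:
\[
\sum_{i\in G_a} p_i (q_{ij}-s_j)^2=\sum_{i\in G_a}p_i(q_{ij}-\bar q_{aj})^2+2(\bar q_{aj}-s_j)\sum_{i\in G_a}p_i(q_{ij}-\bar q_{aj})+P_a(\bar q_{aj}-s_j)^2.
\]
The cross term vanishes because $\sum_{i\in G_a}p_i q_{ij}=P_a\bar q_{aj}$ by the definition of $\bar q_a$. Dividing by $s_j$ and summing over $j$ and $a$ gives the displayed decomposition. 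The first piece collects into $\sum_a P_a\chi^2(\bar q_a\|s)$. For this to be a genuine $\chi^2$ divergence, $\bar q_a$ must be a probability vector; that holds because it is a convex combination of the probability vectors $q_i$, $i\in G_a$, with weights $p_i/P_a$ (I assume $P_a>0$, as guaranteed by the standing assumption that active rows have $p_i>0$).

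For the final claim, I will identify the between-group term with the dependence index of the aggregated matrix. Let $\tilde A_{aj}=\sum_{i\in G_a}A_{ij}$. Its row marginals are $P_a$, its column marginals are still $s_j$, and its row profiles are $\tilde A_{aj}/P_a=\bar q_{aj}$. Applying Proposition~\ref{prop:profiledecomp} to $\tilde A$ then gives $\Xdep(\tilde A)=\sum_a P_a\chi^2(\bar q_a\|s)$. The within-group term is a sum of nonnegative quantities, so $\Xdep(\tilde A)\le\Xdep(A)$, and equality holds exactly when the profiles are constant within each group. Merging two investors is the special case of a partition with one doubleton block.

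I do not expect a real obstacle, since this is just a weighted bias--variance identity. The only points needing care are the vanishing of the cross term, which is where the $p_i$-weighted definition of $\bar q_a$ is essential, and the bookkeeping that the aggregated matrix keeps the same stock marginals $s$, so that the benchmark in the between-group term is unchanged.
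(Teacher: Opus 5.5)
Your proposal is correct and follows the paper's proof. Both start from the row profile decomposition of Proposition~\ref{prop:profiledecomp}, split $q_i-s=(q_i-\bar q_a)+(\bar q_a-s)$, and eliminate the cross term using $\sum_{i\in G_a}p_i(q_i-\bar q_a)=0$; the paper works in the $s^{-1}$-weighted inner product on stock vectors, while you expand stock by stock, and your explicit identification of the between-group term as $\Xdep(\tilde A)$ for the aggregated matrix (row marginals $P_a$, unchanged column marginals $s$) usefully spells out a step the paper leaves implicit in its ``in particular'' clause.
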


\begin{proof}
For vectors over stocks, define the weighted inner product
\[
\inner{x}{y}_{s^{-1}}:=\sum_{j=1}^m \frac{x_j y_j}{s_j},
\qquad
\norm{x}_{s^{-1}}^2:=\inner{x}{x}_{s^{-1}}.
\]
By Proposition \ref{prop:profiledecomp},
\[
\Xdep(A)=\sum_{a=1}^g \sum_{i\in G_a} p_i \norm{q_i-s}_{s^{-1}}^2.
\]
For each $i\in G_a$, decompose
\[
q_i-s=(q_i-\bar q_a)+(\bar q_a-s).
\]
Hence
\[
\norm{q_i-s}_{s^{-1}}^2
=
\norm{q_i-\bar q_a}_{s^{-1}}^2
+2\inner{q_i-\bar q_a}{\bar q_a-s}_{s^{-1}}
+\norm{\bar q_a-s}_{s^{-1}}^2.
\]
After weighting by $p_i$ and summing over $i\in G_a$, the cross term vanishes because
\[
\sum_{i\in G_a} p_i (q_i-\bar q_a)=0.
\]
Therefore
\[
\sum_{i\in G_a} p_i \norm{q_i-s}_{s^{-1}}^2
=
\sum_{i\in G_a} p_i \norm{q_i-\bar q_a}_{s^{-1}}^2
+
P_a \norm{\bar q_a-s}_{s^{-1}}^2.
\]
Summing over $a$ yields the claimed identity.
\end{proof}

Theorem \ref{thm:multiscale} is the paper's central structural result on dependence. It shows that benchmark-adjusted ownership dependence is an exact multiscale object. Coarser ownership data do not merely \emph{approximate} fine-level dependence: they retain the between-group component while discarding a precisely quantified within-group heterogeneity term.

\begin{corollary}[Exact merger law for dependence]\label{cor:mergerX}
Suppose investors $a$ and $b$ merge into a single entity. Let $\tilde A$ denote the merged ownership matrix, and let
\[
\tilde q=\frac{p_a q_a+p_b q_b}{p_a+p_b}
\]
be the merged portfolio profile. Then
\[
\Xdep(A)-\Xdep(\tilde A)
=
\frac{p_a p_b}{p_a+p_b}\sum_{j=1}^m \frac{(q_{aj}-q_{bj})^2}{s_j}\ge 0.
\]
Equality holds if and only if the two investors have identical portfolio weights: $q_a=q_b$.
\end{corollary}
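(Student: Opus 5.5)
The plan is to derive Corollary \ref{cor:mergerX} directly from Theorem \ref{thm:multiscale}, applied to the partition $\Pi$ whose only nontrivial block is $G=\{a,b\}$; every other investor forms a singleton group.

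First I check that the merged matrix $\tilde A$ is the coarse-grained matrix implicit in the theorem. Its merged row is $A_{a\cdot}+A_{b\cdot}=p_a q_a+p_b q_b$, so the merged row mass is $P=p_a+p_b$ and the merged profile is $\tilde q=\bar q_G$. Every other row of $\tilde A$ is unchanged. The stock marginal $s$ is also unchanged, because merging rows does not alter column sums, and the total mass is still $1$.

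Next I apply Proposition \ref{prop:profiledecomp} to $\tilde A$. This gives
\[
\Xdep(\tilde A)=\sum_{a'} P_{a'}\,\chi^2(\bar q_{a'}\|s),
\]
which is exactly the between-group term of Theorem \ref{thm:multiscale}. The singleton groups contribute nothing to the within-group heterogeneity term. Hence $\Xdep(A)-\Xdep(\tilde A)$ equals the within-group heterogeneity of the single pair block $\{a,b\}$:
\[
p_a\norm{q_a-\tilde q}_{s^{-1}}^2+p_b\norm{q_b-\tilde q}_{s^{-1}}^2 .
\]

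The remaining step is to evaluate this two-point weighted variance. Since
\[
q_a-\tilde q=\frac{p_b}{p_a+p_b}(q_a-q_b),
\qquad
q_b-\tilde q=-\frac{p_a}{p_a+p_b}(q_a-q_b),
\]
the sum equals
\[
\frac{p_ap_b^2+p_bp_a^2}{(p_a+p_b)^2}\norm{q_a-q_b}_{s^{-1}}^2=\frac{p_ap_b}{p_a+p_b}\sum_j\frac{(q_{aj}-q_{bj})^2}{s_j},
\]
which is the claimed formula.

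The value is nonnegative because every $s_j>0$. Since $p_a,p_b>0$ and each term is a nonnegative square weighted by $1/s_j>0$, the value is zero if and only if $q_{aj}=q_{bj}$ for all $j$, that is, $q_a=q_b$.

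The only step needing real care is the identification $\Xdep(\tilde A)$ = between-group term. That identification depends on the stock marginal $s$ being preserved under merging, so that the same benchmark $s$ appears on both sides; this is the point to state explicitly.
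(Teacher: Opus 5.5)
Your proposal is correct and follows the paper's proof: you apply Theorem~\ref{thm:multiscale} to the partition whose only nontrivial block is $\{a,b\}$, and you evaluate the two-point within-block term as $\frac{p_ap_b}{p_a+p_b}\norm{q_a-q_b}_{s^{-1}}^2$. You are also more explicit than the paper on two points it leaves implicit: you identify the between-group term with $\Xdep(\tilde A)$, using that merging rows preserves $s$, and you verify the equality case.
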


\begin{proof}
Apply Theorem \ref{thm:multiscale} to the partition whose only nontrivial block is $\{a,b\}$. The within-block term becomes
\[
p_a \norm{q_a-\tilde q}_{s^{-1}}^2+p_b \norm{q_b-\tilde q}_{s^{-1}}^2
=
\frac{p_a p_b}{p_a+p_b}\norm{q_a-q_b}_{s^{-1}}^2.
\]
This is exactly the dependence lost under the merger.
\end{proof}

Corollary \ref{cor:mergerX} already shows that raw concentration and benchmark-adjusted dependence respond differently to consolidation. Merging investors averages row profiles and thereby weakly reduces dependence, even though it weakly increases the raw concentration of ownership. Section \ref{sec:operations} makes this contrast explicit.

\section{Fixed marginals and transportation-polytope geometry}\label{sec:transport}

Fix probability vectors $p\in\R_+^n$ and $s\in\R_+^m$ with the same total mass. The fixed-marginal ownership set is
\[
\T(p,s)=\{A\in\R_+^{n\times m}:A\1=p,\ A^\top \1=s\}.
\]
This is the classical transportation polytope associated with $(p,s)$ \citep{kleewitzgall1968,diaconisgangolli1995,brualdi2006,deloerakim2014}.

\begin{theorem}[Optimization of raw micro concentration on fixed marginals]\label{thm:transportopt}
For fixed marginals $(p,s)$:
\begin{enumerate}[label={\rm(\roman*)},leftmargin=2em]
    \item the problem $\min_{A\in\T(p,s)} M(A)$ has a unique solution;
    \item the problem $\max_{A\in\T(p,s)} M(A)$ admits a solution at an extreme point of $\T(p,s)$.
\end{enumerate}
\end{theorem}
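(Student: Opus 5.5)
Both parts rest on two elementary facts: $\T(p,s)$ is a nonempty compact convex polytope, and $M(A)=\norm{A}_F^2$ is a strictly convex continuous function of $A$. The whole argument is the standard pairing of strict convexity with minimization and of convexity with maximization over a polytope.

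First I settle nonemptiness and compactness. Since $p$ and $s$ are probability vectors with equal total mass, the proportional matrix $ps^\top$ has row sums $p_i\sum_j s_j=p_i$ and column sums $s_j$, so $ps^\top\in\T(p,s)$. The set is cut out by finitely many linear equalities and the inequalities $A_{ij}\ge 0$, so it is a closed convex polyhedron. It is bounded because $0\le A_{ij}\le \min\{p_i,s_j\}\le 1$ for every entry. Hence $\T(p,s)$ is a compact convex polytope.

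For (i), existence of a minimizer follows from continuity of $M$ on this compact set (Weierstrass). For uniqueness, suppose $A\neq B$ are both minimizers with value $M^\ast$. The midpoint $C=\tfrac12(A+B)$ lies in $\T(p,s)$ by convexity. The parallelogram identity gives
\[
\norm{C}_F^2=\tfrac12\norm{A}_F^2+\tfrac12\norm{B}_F^2-\tfrac14\norm{A-B}_F^2<M^\ast,
\]
which is a contradiction. The same conclusion can be phrased as: this is the Euclidean projection of the origin onto a nonempty closed convex set, which is unique.

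For (ii), a maximizer exists by the same compactness argument. Because $\T(p,s)$ is a bounded polyhedron, the Minkowski/Krein--Milman theorem lets us write any maximizer $A^\ast$ as a convex combination $\sum_k\lambda_k E_k$ of finitely many extreme points $E_k$. Convexity of $M$ then gives
\[
M(A^\ast)\le\sum_k\lambda_k M(E_k)\le\max_k M(E_k).
\]
So some extreme point $E_k$ attains at least $M(A^\ast)$ and is therefore itself a maximizer. Strict convexity in fact forces every maximizer to be an extreme point, but the statement only asks for existence.

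The main point to be careful about is this finite-vertex representation. It requires boundedness of $\T(p,s)$, established above, and the finiteness of its extreme points, which holds because it is a polytope. No further characterization of vertices, such as the spanning-forest support structure of transportation-polytope vertices, is needed for the statement as posed.
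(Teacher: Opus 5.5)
Your proof is correct and follows the same route as the paper. For (i) you use compactness of $\T(p,s)$ and strict convexity of $M(A)=\norm{A}_F^2$; for (ii) you use convexity of $M$ to place a maximizer at an extreme point. The paper's two-point argument (a non-extreme maximizer has an endpoint that is also a maximizer) would by itself still need an iteration over faces to reach a vertex, and your Minkowski finite-vertex representation, or equally your strict-convexity remark, supplies that missing step.
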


\begin{proof}
The set $\T(p,s)$ is compact and convex. The map $A\mapsto M(A)=\norm{A}_F^2$ is continuous and strictly convex, so the minimum exists and is unique. Since $M$ is convex, a maximizer can be chosen at an extreme point: if a maximizer were a nontrivial convex combination of two feasible matrices, then at least one endpoint would also be a maximizer.
\end{proof}

\begin{remark}[KKT characterization of the minimizer]
The unique minimizer is the Euclidean projection of the zero matrix onto $\T(p,s)$. The KKT conditions imply the existence of multipliers $(\lambda_i)$ and $(\mu_j)$ such that
\[
A_{ij}^\star=\max\!\left\{0,\frac{\lambda_i+\mu_j}{2}\right\},
\]
with the row and column constraints determining the multipliers. Thus the minimum-concentration solution has an additive-threshold structure.
\end{remark}

To characterize the maximizers more sharply, consider the support graph $G(A)$: a bipartite graph with investor vertices $\{1,\dots,n\}$, stock vertices $\{1,\dots,m\}$, and an edge $(i,j)$ whenever $A_{ij}>0$.

\begin{theorem}[Extreme points are exactly the acyclic supports]\label{thm:forest}
Assume the active rows and columns satisfy $p_i>0$ and $s_j>0$. Then $A\in\T(p,s)$ is an extreme point if and only if its support graph $G(A)$ is a forest. Consequently,
\[
\#\supp(A)\le n+m-1
\]
for every extreme point. Therefore, some maximizer of $M(A)$ over $\T(p,s)$ has at most $n+m-1$ positive cells.
\end{theorem}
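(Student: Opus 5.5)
The plan is the standard cycle-perturbation argument for transportation polytopes, run in both directions, followed by a forest edge count and an appeal to Theorem~\ref{thm:transportopt}(ii).

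\emph{Cycle $\Rightarrow$ not extreme.} Suppose $G(A)$ contains a cycle. Since $G(A)$ is bipartite, the cycle alternates between investor and stock vertices and has even length: $i_1 j_1 i_2 j_2 \cdots i_k j_k i_1$, using the cells $(i_1,j_1),(i_2,j_1),(i_2,j_2),\dots,(i_1,j_k)$. Let $E$ be the matrix with $+1$ on the cells $(i_t,j_t)$ and $-1$ on the cells $(i_{t+1},j_t)$, indices taken cyclically. On a simple cycle these cells are distinct, and every row and column meets the cycle in exactly one $+1$ and one $-1$, so $E\1=0$ and $E^\top\1=0$. All cells of the cycle lie in $\supp(A)$, so for
\[
0<\varepsilon\le\min_{(i,j)\in\text{cycle}}A_{ij}
\]
both $A\pm\varepsilon E$ lie in $\T(p,s)$. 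Then $A=\tfrac12(A+\varepsilon E)+\tfrac12(A-\varepsilon E)$ with $E\neq0$, so $A$ is not extreme.

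\emph{Forest $\Rightarrow$ extreme.} Suppose $G(A)$ is a forest and $A=\tfrac12(B+C)$ with $B,C\in\T(p,s)$. Nonnegativity forces $\supp(B),\supp(C)\subseteq\supp(A)$. Hence $D=B-C$ is supported on the edges of $G(A)$ and has zero row and column sums. I will show $D=0$ by leaf-stripping. Any nonempty forest with an edge has a vertex of degree one in the support of $D$. Its row (or column) sum constraint involves a single edge, so the entry of $D$ on that edge is $0$. Remove the edge and repeat; by induction on the number of edges, $D\equiv 0$. Thus $B=C=A$ and $A$ is extreme. The main point needing care is that the conservation constraints hold at every vertex, including at the leaf. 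This is fine because $D$ has zero row and column sums everywhere, so no constraint is lost during the induction.

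\emph{Counting and the maximizer.} A forest on $n+m$ vertices has at most $n+m-1$ edges, with equality exactly when it is a spanning tree. This gives $\#\supp(A)\le n+m-1$ for every extreme point. Finally, Theorem~\ref{thm:transportopt}(ii) gives a maximizer of $M$ at an extreme point, and that maximizer inherits the support bound.

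I expect no serious obstacle. The only mildly delicate point is verifying that $E$ is well defined on a simple cycle, where distinct vertices guarantee distinct cells. The hypothesis $p_i,s_j>0$ is used only to ensure that all $n+m$ vertices are genuinely present in the count. Dropping zero rows or columns, the count would only improve.
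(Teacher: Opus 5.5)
Your proof is correct and follows the paper's argument essentially step for step: an alternating $\pm1$ perturbation along a cycle shows the matrix is not extreme, leaf-stripping of the zero-margin difference $B-C$ (which is supported inside the forest) shows it is extreme, and the forest edge count together with Theorem~\ref{thm:transportopt}(ii) gives the maximizer. Two cosmetic remarks. First, the paper uses a general $\lambda\in(0,1)$ rather than a midpoint, but your support argument works verbatim for any such $\lambda$. Second, the positivity of $p_i,s_j$ is not actually needed for the count: isolated vertices are still vertices of $G(A)$, and a forest on $n+m$ vertices has at most $n+m-1$ edges regardless.
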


\begin{proof}
If $G(A)$ contains a cycle, assign alternating signs $\pm 1$ around the cycle and let $B$ be the matrix supported on those edges with the corresponding signs. Every row and column sum of $B$ is zero. For sufficiently small $\varepsilon>0$, both $A+\varepsilon B$ and $A-\varepsilon B$ remain nonnegative and belong to $\T(p,s)$, and
\[
A=\frac12(A+\varepsilon B)+\frac12(A-\varepsilon B)
\]
is a nontrivial convex decomposition. Hence $A$ is not extreme.

Conversely, suppose $G(A)$ is a forest. Let $A=\lambda B+(1-\lambda)C$ with $0<\lambda<1$ and $B,C\in\T(p,s)$. Set $E=B-C$. Then every row and column sum of $E$ is zero. Moreover, if $A_{ij}=0$, nonnegativity of $B$ and $C$ forces $B_{ij}=C_{ij}=0$, so $\supp(E)\subseteq \supp(A)$. Because $G(A)$ is a forest, it has a leaf vertex. The unique edge incident to that leaf must carry zero $E$-mass because the corresponding row or column sum of $E$ is zero. Remove that edge and continue inductively through the forest. All entries of $E$ vanish, so $B=C=A$. Thus $A$ is extreme.

The support-size bound follows because a forest on $n+m$ vertices has at most $n+m-1$ edges.
\end{proof}

\begin{definition}[Feasible-range sparsity score]
For fixed marginals $(p,s)$ with $M_{\min}(p,s)<M_{\max}(p,s)$, define
\[
\Psi(A\mid p,s)
=
\frac{M(A)-M_{\min}(p,s)}{M_{\max}(p,s)-M_{\min}(p,s)},
\qquad A\in\T(p,s).
\]
\end{definition}

\begin{proposition}
For every $A\in\T(p,s)$,
\[
0\le \Psi(A\mid p,s)\le 1.
\]
Moreover, $\Psi(A\mid p,s)=0$ if and only if $A$ is the unique fixed-marginal minimizer of $M$, whereas $\Psi(A\mid p,s)=1$ if and only if $A$ is a fixed-marginal maximizer of $M$.
\end{proposition}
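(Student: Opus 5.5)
The plan is to read the statement as a direct consequence of the definitions $M_{\min}(p,s)=\min_{A\in\T(p,s)}M(A)$ and $M_{\max}(p,s)=\max_{A\in\T(p,s)}M(A)$, together with Theorem~\ref{thm:transportopt}. First I will check that both quantities are well defined and attained. Theorem~\ref{thm:transportopt} already gives this: $\T(p,s)$ is compact and nonempty, since $ps^\top\in\T(p,s)$, and $M$ is continuous. So the minimum is attained (uniquely) and the maximum is attained (at an extreme point). The hypothesis $M_{\min}<M_{\max}$ guarantees that the denominator of $\Psi$ is strictly positive, so $\Psi(\cdot\mid p,s)$ is well defined on all of $\T(p,s)$.

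Next comes the bounds. For any $A\in\T(p,s)$, the definitions of min and max give $M_{\min}\le M(A)\le M_{\max}$. Subtracting $M_{\min}$ and dividing by the positive number $M_{\max}-M_{\min}$ preserves the inequalities, which yields $0\le\Psi\le 1$.

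Then the characterizations of the endpoints. Since the affine map $x\mapsto (x-M_{\min})/(M_{\max}-M_{\min})$ is strictly increasing, hence injective, $\Psi(A\mid p,s)=0$ holds iff $M(A)=M_{\min}$. That is the same as saying $A$ is a minimizer of $M$ over $\T(p,s)$. By part~(i) of Theorem~\ref{thm:transportopt}, which follows from strict convexity of $\norm{\cdot}_F^2$ on a convex set, this minimizer is unique, so $A$ is \emph{the} minimizer. Likewise, $\Psi(A\mid p,s)=1$ holds iff $M(A)=M_{\max}$, i.e.\ iff $A$ is a maximizer. I will not claim uniqueness on this side: maximizers need not be unique, and Theorem~\ref{thm:transportopt}(ii) only asserts existence at an extreme point.

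There is no real obstacle. The only points needing care are making the positivity of the denominator explicit, and citing uniqueness of the minimizer correctly from strict convexity so that the definite article ``the unique minimizer'' is justified. One could add a remark that when $M_{\min}=M_{\max}$, $M$ is constant on $\T(p,s)$; by strict convexity this forces $\T(p,s)$ to be a single point, which explains why that case is excluded.
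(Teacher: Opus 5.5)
Your proposal is correct and follows the same route as the paper, which simply observes that the claim is immediate from the definition of $\Psi$ and Theorem~\ref{thm:transportopt}. You spell out the steps the paper leaves implicit: attainment of $M_{\min}$ and $M_{\max}$, positivity of the denominator, and strict convexity as the source of the minimizer's uniqueness.
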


\begin{proof}
This follows immediately from the definition and Theorem \ref{thm:transportopt}.
\end{proof}

The score $\Psi(A\mid p,s)$ answers a different question from $\Xdep(A)$. The quantity $\Xdep(A)$ asks \emph{how far is the ownership matrix from proportional ownership?} In contrast, $\Psi(A\mid p,s)$ asks \emph{how sparse or concentrated is the ownership matrix relative to the sharp feasible range allowed by the marginals?} The two objects are complementary, not interchangeable.

\begin{proposition}[The $2\times2$ transport family]\label{prop:2x2}
Let $p=(a,1-a)$ and $s=(b,1-b)$ with $a,b\in(0,1)$. Every matrix in $\T(p,s)$ has the form
\[
A(x)=
\begin{pmatrix}
x & a-x\\[4pt]
b-x & 1-a-b+x
\end{pmatrix},
\qquad
x\in I_{a,b}:=\bigl[\max\{0,a+b-1\},\,\min\{a,b\}\bigr].
\]
Moreover,
\[
M(A(x))=x^2+(a-x)^2+(b-x)^2+(1-a-b+x)^2,
\]
so the unconstrained minimizer is
\[
x^\star=\frac{a+b}{2}-\frac14,
\]
and the constrained minimizer is obtained by projecting $x^\star$ onto $I_{a,b}$.
\end{proposition}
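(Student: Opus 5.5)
The plan is to parametrize $\T(p,s)$ by a single free entry, then reduce the minimization of $M$ to a one-dimensional convex quadratic on an interval.

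\emph{Parametrization.} For $A\in\T(p,s)$, set $x=A_{11}$. The row constraint $A_{11}+A_{12}=a$ gives $A_{12}=a-x$. The column constraint $A_{11}+A_{21}=b$ gives $A_{21}=b-x$. The remaining row constraint gives $A_{22}=1-a-(b-x)=1-a-b+x$. The second column sum is then automatically $(a-x)+(1-a-b+x)=1-b$, so every feasible matrix is of the form $A(x)$. Conversely, each such $A(x)$ has the right marginals. Nonnegativity of the four entries reads $x\ge 0$, $x\le a$, $x\le b$ and $x\ge a+b-1$. This is exactly $x\in I_{a,b}$.

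\emph{The objective.} Substituting $A(x)$ into $M(A)=\sum A_{ij}^2$ gives the displayed quadratic
\[
f(x)=x^2+(a-x)^2+(b-x)^2+(1-a-b+x)^2 .
\]
Its derivative is
\[
f'(x)=2x-2(a-x)-2(b-x)+2(1-a-b+x)=8x-4a-4b+2 .
\]
Setting $f'(x)=0$ yields $x^\star=\tfrac{a+b}{2}-\tfrac14$. Since $f''=8>0$, the function $f$ is strictly convex.

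\emph{Constrained minimizer.} A strictly convex parabola on a compact interval is minimized at the projection of its vertex onto that interval. Indeed, if $x^\star$ lies to the left of $I_{a,b}$, then $f$ is increasing on $I_{a,b}$ and the left endpoint is optimal. The case $x^\star$ to the right of $I_{a,b}$ is symmetric, and if $x^\star\in I_{a,b}$ the vertex itself is optimal. The interval $I_{a,b}$ is nonempty because $a,b\in(0,1)$ implies $\max\{0,a+b-1\}\le\min\{a,b\}$.

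This recovers, in explicit form, the unique minimizer guaranteed by Theorem~\ref{thm:transportopt}(i). There is no real obstacle in this argument. The only points requiring care are checking that the nonnegativity constraints give exactly the interval $I_{a,b}$ and that this interval is nonempty.
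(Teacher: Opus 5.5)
Your proposal is correct and follows essentially the same route as the paper: take the free entry $x=A_{11}$, substitute into $M$, differentiate to get $8x-4a-4b+2$, and use the constant second derivative $8>0$ to obtain the constrained minimizer by projecting $x^\star$ onto $I_{a,b}$. You add detail the paper leaves implicit, namely deriving $I_{a,b}$ from nonnegativity, checking that $I_{a,b}$ is nonempty, and justifying the projection step.
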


\begin{proof}
The row and column constraints imply the stated form. Substituting into $M(A)$ yields the quadratic expression. Differentiation gives
\[
\frac{d}{dx}M(A(x))=8x-4a-4b+2,
\]
whose unique zero is $x^\star=(a+b)/2-1/4$. Since the second derivative equals $8$, the constrained minimum is obtained by projection onto the feasible interval.
\end{proof}

\begin{example}[Proportional ownership is not generally the fixed-marginal minimum]\label{ex:notmin}
Under the proportional benchmark, the $2\times2$ family in Proposition \ref{prop:2x2} has $x=ab$. This equals the fixed-marginal minimizer only when
\[
ab=\frac{a+b}{2}-\frac14
\qquad\Longleftrightarrow\qquad
(2a-1)(2b-1)=0.
\]
Thus proportional ownership is not generally the fixed-marginal minimum of $M(A)$.

For instance, if $a=b=0.9$, then
\[
A_{\mathrm{prod}}=
\begin{pmatrix}
0.81 & 0.09\\
0.09 & 0.01
\end{pmatrix},
\qquad
M(A_{\mathrm{prod}})=0.6724,
\]
whereas the true minimizer over $\T(p,s)$ is
\[
A_{\min}=
\begin{pmatrix}
0.80 & 0.10\\
0.10 & 0
\end{pmatrix},
\qquad
M(A_{\min})=0.66.
\]
The benchmark $ps^\top$ is therefore the natural \emph{no-dependence} reference point, but not a universal minimum of raw micro concentration.
\end{example}

\section{Spectral structure of ownership dependence}\label{sec:spectral}

Assume from now on that active rows and columns satisfy $p_i>0$ and $s_j>0$. Define the diagonal matrices
\[
D_p=\diag(p_1,\dots,p_n),
\qquad
D_s=\diag(s_1,\dots,s_m),
\]
and the whitened ownership matrix
\[
K=D_p^{-1/2} A D_s^{-1/2}.
\]
Let
\[
u=(\sqrt{p_1},\dots,\sqrt{p_n})^\top,
\qquad
v=(\sqrt{s_1},\dots,\sqrt{s_m})^\top.
\]
Then $\norm{u}_2=\norm{v}_2=1$.

\begin{remark}[Connection to correspondence analysis]
The centered matrix
\[
L:=D_p^{-1/2}(A-ps^\top)D_s^{-1/2}=K-u v^\top
\]
is exactly the standardized residual matrix of correspondence analysis \citep{greenacre1984,greenacre2007}. The difference here is interpretive: in the present setting, the nontrivial singular modes of $K$ represent ownership-overlap modes beyond the rank-one market benchmark.
\end{remark}

\begin{theorem}[Contractive normalization]\label{thm:contractive}
The matrix $K$ satisfies
\[
Kv=u,
\qquad
K^\top u=v.
\]
Moreover,
\[
\norm{K}_{op}\le 1,
\]
so its largest singular value is exactly $1$.
\end{theorem}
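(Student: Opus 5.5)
The plan has three steps: verify the two eigen-relations by direct computation, bound the operator norm of $K$ by the Cauchy--Schwarz inequality, and then combine the two.

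\textbf{Eigen-relations.} Entrywise we have $K_{ij}=A_{ij}/\sqrt{p_i s_j}$. Hence
\[
(Kv)_i=\sum_j \frac{A_{ij}}{\sqrt{p_i s_j}}\sqrt{s_j}=\frac{1}{\sqrt{p_i}}\sum_j A_{ij}=\frac{p_i}{\sqrt{p_i}}=\sqrt{p_i}=u_i .
\]
The identity $K^\top u=v$ follows symmetrically from the column sums $\sum_i A_{ij}=s_j$. This step is routine.

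\textbf{Operator-norm bound.} I will show that $|x^\top K y|\le\norm{x}_2\norm{y}_2$ for all $x\in\R^n$ and $y\in\R^m$. Write
\[
x^\top K y=\sum_{i,j}A_{ij}\,\frac{x_i}{\sqrt{p_i}}\,\frac{y_j}{\sqrt{s_j}}.
\]
Since the weights $A_{ij}\ge 0$, Cauchy--Schwarz with respect to these weights gives
\[
|x^\top Ky|\le\Bigl(\sum_{i,j}A_{ij}\frac{x_i^2}{p_i}\Bigr)^{1/2}\Bigl(\sum_{i,j}A_{ij}\frac{y_j^2}{s_j}\Bigr)^{1/2}.
\]
Summing the first factor over $j$ gives $\sum_i A_{ij}$-free row sums $p_i$, so it equals $\sum_i x_i^2=\norm{x}_2^2$. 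Summing the second factor over $i$ uses the column sums $s_j$ and gives $\norm{y}_2^2$. Taking the supremum over unit vectors yields $\norm{K}_{op}\le 1$.

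\textbf{Conclusion.} Since $\norm{u}_2=\norm{v}_2=1$ and $u^\top K v=u^\top u=1$, the operator norm is attained. So the largest singular value is exactly $1$, with left singular vector $u$ and right singular vector $v$.

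The only nontrivial step is the norm bound, specifically choosing the Cauchy--Schwarz splitting so that each factor collapses via a marginal identity. An alternative route is to view $K^\top K$ as similar to the stochastic matrix $D_s^{-1}A^\top D_p^{-1}A$, whose spectral radius is $1$. The weighted Cauchy--Schwarz argument is shorter and avoids spectral-radius arguments for non-symmetric matrices, so I would use it.
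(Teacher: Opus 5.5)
Your proof is correct and essentially matches the paper's. Both verify $Kv=u$ and $K^\top u=v$ directly, and both obtain $\norm{K}_{op}\le 1$ from a Cauchy--Schwarz/Jensen inequality with the nonnegative weights of $A$, which then collapses through the row and column marginals. The only difference is cosmetic: you bound the symmetric bilinear form $x^\top K y$ with weights $A_{ij}$, whereas the paper bounds $\norm{Ky}_2^2=\sum_i p_i(\sum_j q_{ij} z_j)^2$ by Jensen over the row profiles $q_{ij}$ and then uses the column sums.
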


\begin{proof}
Using the row and column constraints,
\[
Kv=D_p^{-1/2} A D_s^{-1/2}v
   =D_p^{-1/2}A\1
   =D_p^{-1/2}p
   =u,
\]
and similarly $K^\top u=v$.

For the norm bound, let $y\in\R^m$ and set $z_j=y_j/\sqrt{s_j}$. Then
\[
\norm{Ky}_2^2
=
\sum_{i=1}^n \frac{1}{p_i}\left(\sum_{j=1}^m A_{ij} z_j\right)^2
=
\sum_{i=1}^n p_i\left(\sum_{j=1}^m q_{ij} z_j\right)^2.
\]
By Jensen's inequality,
\[
\left(\sum_j q_{ij} z_j\right)^2\le \sum_j q_{ij} z_j^2.
\]
Therefore
\[
\norm{Ky}_2^2
\le
\sum_i p_i \sum_j q_{ij} z_j^2
=
\sum_j s_j z_j^2
=
\sum_j y_j^2
=
\norm{y}_2^2.
\]
So $\norm{K}_{op}\le 1$. Because $Kv=u$ and both vectors have unit norm, equality is attained at the top singular value.
\end{proof}

\begin{theorem}[Spectral decomposition of benchmark-adjusted dependence]\label{thm:spectral}
Let
\[
1=\sigma_1(K)\ge \sigma_2(K)\ge\cdots\ge \sigma_r(K)>0
\]
be the positive singular values of $K$. Then
\[
\Xdep(A)=\norm{K-u v^\top}_F^2=\sum_{k=2}^{r}\sigma_k(K)^2.
\]
\end{theorem}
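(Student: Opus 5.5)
The proof splits into two identities: a direct entrywise computation for the Frobenius form, and a singular value argument for the spectral form.

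\textbf{First identity.} The entries of $L=K-uv^\top$ are
\[
L_{ij}=\frac{A_{ij}}{\sqrt{p_i s_j}}-\sqrt{p_i s_j}=\frac{A_{ij}-p_i s_j}{\sqrt{p_i s_j}}.
\]
Squaring and summing gives $\norm{L}_F^2=\Xdep(A)$ directly from the definition. As a cross-check, $\norm{K}_F^2=\sum_{i,j}A_{ij}^2/(p_i s_j)$ and $\inner{K}{uv^\top}_F=u^\top K v=u^\top u=1$ by Theorem \ref{thm:contractive}. Hence
\[
\norm{L}_F^2=\norm{K}_F^2-2+1=\sum_{i,j}\frac{A_{ij}^2}{p_i s_j}-1,
\]
which matches the first form in Proposition \ref{prop:profiledecomp}.

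\textbf{Second identity.} The plan is to show that $(u,v)$ is a singular pair of $K$ for $\sigma_1=1$, and that removing it leaves exactly the other singular values. By Theorem \ref{thm:contractive}, $Kv=u$ and $K^\top u=v$, so $u$ is a left and $v$ a right singular vector with singular value $1$.

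Take a full SVD $K=\sum_{k=1}^r\sigma_k a_k b_k^\top$ with $a_1=u$ and $b_1=v$. This choice needs justification, because the top singular value might be repeated. Since $K^\top K v=v$ and $KK^\top u=u$, the vectors $v$ and $u$ lie in the eigenspaces for eigenvalue $1$. We can choose orthonormal bases of the relevant singular subspaces starting from $v$, and set $a_k=Kb_k/\sigma_k$, which gives $a_1=Kv=u$.

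With this choice, $K-uv^\top=\sum_{k\ge2}\sigma_k a_kb_k^\top$, and the orthonormality of the $a_k$ and $b_k$ gives $\norm{K-uv^\top}_F^2=\sum_{k\ge2}\sigma_k^2$. A basis-free alternative is to subtract $\sigma_1^2=1$ from $\norm{K}_F^2=\sum_k\sigma_k^2$ and combine this with the cross-check above.

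The main obstacle is this step: legitimately aligning the SVD so that $(u,v)$ is the first singular pair when $\sigma_1$ may have multiplicity. Using $\norm{K}_F^2=\Tr(K^\top K)=\sum_k\sigma_k^2$ together with $\norm{L}_F^2=\norm{K}_F^2-1$ sidesteps the issue entirely. I would therefore present that route as the main argument.
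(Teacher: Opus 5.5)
Your proof is correct, but your main route differs from the paper's. Your first identity is the same entrywise computation the paper uses.

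For the second identity, the paper argues structurally. From $Lv=0$ and $L^\top u=0$ it asserts that $K=uv^\top+L$ is an orthogonal splitting and that the singular values of $L$ are \emph{precisely} $\sigma_2(K),\dots,\sigma_r(K)$. The unstated justification is that $K^\top K=vv^\top+L^\top L$ with $L^\top L v=0$. So the spectrum of $K^\top K$ is that of $L^\top L$ with the one zero eigenvalue in the $v$ direction replaced by $1$, which also handles multiplicity of the top singular value.

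Your main route uses only the Frobenius Pythagoras identity $\norm{K}_F^2=\norm{uv^\top}_F^2+\norm{L}_F^2$, which holds because $\langle K, uv^\top\rangle_F=u^\top K v=1$. You combine this with $\norm{K}_F^2=\sum_k\sigma_k^2$ and $\sigma_1=1$ from Theorem \ref{thm:contractive}. This is more elementary, fully proves the theorem as stated, and is immune to a repeated top singular value.

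What your route gives up is the identification of the whole singular spectrum of $L$. The paper relies on that stronger fact downstream: Theorems \ref{thm:firesale} and \ref{thm:alpha} cite this theorem for $\norm{L}_{op}=\rho(A)=\sigma_2(K)$, and that does not follow from a sum-of-squares identity alone.

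Your aligned-SVD alternative, which chooses $b_1=v$ inside the top eigenspace of $K^\top K$ and sets $a_1=Kv=u$, does deliver the stronger conclusion. It exhibits $L=\sum_{k\ge 2}\sigma_k a_k b_k^\top$ as a genuine SVD of $L$, so it is worth keeping alongside the trace argument rather than discarding it.
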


\begin{proof}
By definition,
\[
\norm{K-u v^\top}_F^2
=
\sum_{i,j}\left(\frac{A_{ij}}{\sqrt{p_i s_j}}-\sqrt{p_i s_j}\right)^2
=
\sum_{i,j}\frac{(A_{ij}-p_i s_j)^2}{p_i s_j}
=
\Xdep(A).
\]
Set $L=K-u v^\top$. By Theorem \ref{thm:contractive},
\[
Lv=0,
\qquad
L^\top u=0.
\]
Hence $K$ decomposes orthogonally into the rank-one component $u v^\top$ and the residual $L$, whose singular values are precisely $\sigma_2(K),\dots,\sigma_r(K)$. Therefore
\[
\Xdep(A)=\norm{L}_F^2=\sum_{k=2}^{r}\sigma_k(K)^2.
\]
\end{proof}

\begin{corollary}[Dominant overlap mode]
Define
\[
\rho(A):=\sigma_2(K).
\]
Then
\[
\rho(A)=0
\qquad\Longleftrightarrow\qquad
A=ps^\top.
\]
Moreover,
\[
\rho(A)^2\le \Xdep(A)\le (\min\{n,m\}-1)\rho(A)^2.
\]
\end{corollary}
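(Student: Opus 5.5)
The plan is to read everything off Theorem \ref{thm:spectral} and the orthogonal splitting $K=uv^\top+L$ established in its proof. That proof shows $Lv=0$ and $L^\top u=0$. Hence the nonzero singular values of $L$ are exactly $\sigma_2(K),\dots,\sigma_r(K)$, and $\rho(A)=\sigma_2(K)=\norm{L}_{op}$, with the convention $\rho(A)=0$ when $r=1$.

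For the equivalence, first suppose $\rho(A)=0$. Then all $\sigma_k(K)$ with $k\ge2$ vanish, so $\Xdep(A)=0$ by Theorem \ref{thm:spectral}. Equivalently, $L=0$, so $K=uv^\top$. Multiplying on the left by $D_p^{1/2}$ and on the right by $D_s^{1/2}$ gives $A=D_p^{1/2}uv^\top D_s^{1/2}=ps^\top$. Conversely, if $A=ps^\top$, then $L=D_p^{-1/2}(A-ps^\top)D_s^{-1/2}=0$, and therefore $\sigma_2(K)=\norm{L}_{op}=0$. The assumption $p_i,s_j>0$ on the active rows and columns is what makes the whitening invertible, and I will note that explicitly.

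For the two-sided bound, write $\Xdep(A)=\sum_{k=2}^r\sigma_k(K)^2$. The lower bound $\rho(A)^2\le\Xdep(A)$ holds because the sum contains the term $\sigma_2^2$ and all terms are nonnegative. If $r=1$, both sides are $0$. For the upper bound, each term satisfies $\sigma_k\le\sigma_2$, and the number of terms is $r-1$. Since $r=\operatorname{rank}K\le\min\{n,m\}$, this gives $\Xdep(A)\le(\min\{n,m\}-1)\rho(A)^2$.

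The only step that needs care is the identification of the singular values of $L$ with $\sigma_2,\dots,\sigma_r$ of $K$, including multiplicity of the top value $1$. This is the step I expect to be the main obstacle, and I would argue it directly rather than cite it. Because $L^\top u=0$ and $Lv=0$, we have $K^\top K=vv^\top+L^\top L$, and this is an orthogonal direct sum, since $L^\top L$ annihilates $v$ and has range in $v^\perp$. So the spectrum of $K^\top K$ consists of $1$ (on $v$) together with the eigenvalues of $L^\top L$ restricted to $v^\perp$. Therefore the multiset $\{\sigma_k(K)^2\}_{k\ge2}$ equals the nonzero spectrum of $L^\top L$. If $1$ also occurs as a singular value of $L$, that simply means $\sigma_2(K)=1$, which is consistent with the ordering. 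This also confirms $\operatorname{rank}L=r-1$, so the count in the upper bound is right.
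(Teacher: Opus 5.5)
Your proof is correct and matches the paper's argument: both read the equivalence and the two-sided bound directly off Theorem~\ref{thm:spectral} and the ordering $\sigma_2(K)\ge\sigma_3(K)\ge\cdots$, with $\operatorname{rank}K\le\min\{n,m\}$ controlling the number of terms. Your check via $K^\top K=vv^\top+L^\top L$ makes explicit the identification of the singular values of $L$ with $\sigma_2(K),\dots,\sigma_r(K)$, which the paper only asserts in the proof of Theorem~\ref{thm:spectral}; note that this step also needs $\norm{K}_{op}\le 1$ from Theorem~\ref{thm:contractive}, which guarantees that no eigenvalue of $L^\top L$ exceeds the eigenvalue $1$ carried by $v$.
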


\begin{proof}
The first statement follows because all nontrivial singular values vanish if and only if $K=u v^\top$, i.e.\ $A=ps^\top$. The bounds follow from Theorem \ref{thm:spectral} and the ordering $\sigma_2(K)\ge \sigma_3(K)\ge\cdots$.
\end{proof}

The spectral picture is therefore complete. The first singular value of $K$ is fixed at one by the marginals. Everything nontrivial in ownership dependence lies in the remaining singular spectrum.

\section{Dynamic implications of the spectral residual}\label{sec:dynamic}

The whitened residual
\[
L=D_p^{-1/2}(A-ps^\top)D_s^{-1/2}=K-u v^\top
\]
is not merely a descriptive statistic. It is the operator that transmits nontrivial ownership structure into linearized market dynamics. The trivial rank-one mode $u v^\top$ carries market-wide motion; the residual $L$ carries the orthogonal, crowding-driven component. This section makes that statement exact in two directions: liquidation shocks propagate through $L^\top$, while cross-sectional stock-return dispersion propagates through $L$. These results give formal content to the fragility intuition of \citet{greenwoodthesmar2011} and the overlapping-portfolio perspective of \citet{poledna2021}.

For vectors on the investor and stock sides, write
\[
\norm{x}_{D_p}^2:=x^\top D_p x,
\qquad
\norm{z}_{D_s}^2:=z^\top D_s z.
\]

\subsection{Spectral bound on fire-sale vulnerability}

Let $\delta\in\R^n$ denote the vector of investor-level liquidation rates, so that investor $i$ sells the dollar amount $\delta_i A_{ij}$ of stock $j$. Define the $D_p$-orthogonal decomposition
\[
\delta_\parallel = \left(\sum_{i=1}^n p_i \delta_i\right)\1,
\qquad
\delta_\perp = \delta-\delta_\parallel.
\]
Then $\delta_\parallel$ is the market-wide liquidation component, and $\delta_\perp$ is the idiosyncratic component with weighted mean zero. The aggregate sell pressure across stocks is
\[
F=A^\top \delta.
\]
Under a linear-impact normalization in which relative price impact equals sell pressure divided by stock size, the impact vector is
\[
\Delta P = D_s^{-1}F = D_s^{-1}A^\top\delta.
\]
We measure severity by the capitalization-weighted quadratic price impact $\norm{\Delta P}_{D_s}^2$.

\begin{theorem}[Spectral bound on fire-sale vulnerability]\label{thm:firesale}
For every liquidation shock $\delta\in\R^n$,
\[
\norm{\Delta P}_{D_s}^2
=
\norm{\delta_\parallel}_{D_p}^2
+
\norm{L^\top D_p^{1/2}\delta_\perp}_2^2
\le
\norm{\delta_\parallel}_{D_p}^2
+
\rho(A)^2\norm{\delta_\perp}_{D_p}^2.
\]
In particular, every purely idiosyncratic shock satisfies
\[
\norm{\Delta P}_{D_s}\le \rho(A)\norm{\delta}_{D_p}
\qquad\text{whenever}\qquad
\sum_{i=1}^n p_i\delta_i=0.
\]
\end{theorem}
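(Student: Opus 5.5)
The plan is to rewrite $\norm{\Delta P}_{D_s}^2$ in whitened coordinates so that it becomes the squared Euclidean norm of $K^\top$ applied to a transformed shock. Then the decomposition $K=uv^\top+L$ from Theorem \ref{thm:spectral} splits it orthogonally. First compute
\[
\norm{\Delta P}_{D_s}^2=\delta^\top A D_s^{-1}D_sD_s^{-1}A^\top\delta=\delta^\top A D_s^{-1}A^\top\delta .
\]
Set $x=D_p^{1/2}\delta$, so that $\norm{x}_2=\norm{\delta}_{D_p}$. Since $D_s^{-1/2}A^\top\delta=K^\top D_p^{1/2}\delta$, we get
\[
\norm{\Delta P}_{D_s}^2=\norm{K^\top x}_2^2 .
\]

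Next, transport the $D_p$-orthogonal split of $\delta$ to the whitened side. We have $D_p^{1/2}\delta_\parallel=\bigl(\sum_i p_i\delta_i\bigr)u$, and $u^\top x=\sum_i p_i\delta_i$. Hence $x_\parallel:=D_p^{1/2}\delta_\parallel$ is the Euclidean projection of $x$ onto $u$, and $x_\perp:=D_p^{1/2}\delta_\perp$ is orthogonal to $u$. In particular $\norm{\delta}_{D_p}^2=\norm{\delta_\parallel}_{D_p}^2+\norm{\delta_\perp}_{D_p}^2$.

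Now expand $K^\top=vu^\top+L^\top$ and use the identities $L^\top u=0$ and $Lv=0$ from the proof of Theorem \ref{thm:spectral}, which rest on Theorem \ref{thm:contractive}:
\begin{itemize}
\item $K^\top x_\parallel=(u^\top x)v$, since $L^\top u=0$;
\item $K^\top x_\perp=L^\top x_\perp$, since $u^\top x_\perp=0$.
\end{itemize}
The two pieces are orthogonal because $v^\top L^\top x_\perp=(Lv)^\top x_\perp=0$. Pythagoras and $\norm{v}_2=1$ then give
\[
\norm{K^\top x}_2^2=(u^\top x)^2+\norm{L^\top D_p^{1/2}\delta_\perp}_2^2 ,
\]
and $(u^\top x)^2=\norm{\delta_\parallel}_{D_p}^2$ because $\norm{u}_2=1$. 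This is the claimed identity.

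For the inequality, bound $\norm{L^\top x_\perp}_2\le\norm{L}_{op}\norm{x_\perp}_2$. By the orthogonal decomposition in Theorem \ref{thm:spectral}, the singular values of $L$ are exactly $\sigma_2(K),\dots,\sigma_r(K)$, so $\norm{L}_{op}=\rho(A)$. Moreover $\norm{x_\perp}_2=\norm{\delta_\perp}_{D_p}$. The idiosyncratic case follows by taking $\delta_\parallel=0$ and then a square root.

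The only step needing care is the claim $\norm{L}_{op}=\sigma_2(K)$. Invoking Theorem \ref{thm:spectral} handles it, but there is one subtlety. Since $K^\top K=vv^\top+L^\top L$ with $Lv=0$, the top singular pair $(u,v)$ is carried entirely by the rank-one part, and the spectrum of $L$ is the rest of the spectrum of $K$. If $\sigma_2(K)=1$ there is a repeated top singular value, and $L$ might retain a unit singular value. Even then $\norm{L}_{op}=1=\sigma_2(K)$, so the bound holds. The remaining steps are routine algebra.
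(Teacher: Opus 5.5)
Your proof is correct and follows the paper's argument: whiten via $D_p^{1/2}\delta$ to get $\norm{\Delta P}_{D_s}^2=\norm{K^\top D_p^{1/2}\delta}_2^2$, split along $u$, use $K^\top u=v$, $L^\top u=0$ and $Lv=0$ for the orthogonal Pythagorean split, and finish with $\norm{L}_{op}=\rho(A)$. Your extra justification of $\norm{L}_{op}=\sigma_2(K)$ via $K^\top K=vv^\top+L^\top L$, including the repeated-top-singular-value case, is a sound refinement of the step the paper simply cites from Theorem~\ref{thm:spectral}.
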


\begin{proof}
Set
\[
y=D_p^{1/2}\delta\in\R^n.
\]
Write $y=\alpha u+y_\perp$, where $\alpha=u^\top y$ and $y_\perp\perp u$. Since $D_p^{1/2}\1=u$, one has
\[
\alpha u = D_p^{1/2}\delta_\parallel,
\qquad
 y_\perp = D_p^{1/2}\delta_\perp.
\]
Also,
\[
\Delta P = D_s^{-1}A^\top\delta = D_s^{-1/2}K^\top y.
\]
Therefore
\[
\norm{\Delta P}_{D_s}^2 = \norm{K^\top y}_2^2.
\]
Using $K^\top u=v$ and $L^\top u=0$, we obtain
\[
K^\top y = K^\top(\alpha u+y_\perp)=\alpha v+L^\top y_\perp.
\]
The two terms are orthogonal because
\[
v^\top L^\top y_\perp = (Lv)^\top y_\perp = 0.
\]
Hence
\[
\norm{\Delta P}_{D_s}^2 = \alpha^2 + \norm{L^\top y_\perp}_2^2.
\]
Now $\alpha^2=\norm{\delta_\parallel}_{D_p}^2$ and $\norm{y_\perp}_2^2=\norm{\delta_\perp}_{D_p}^2$. Since $\norm{L}_{op}=\rho(A)$ by Theorem \ref{thm:spectral},
\[
\norm{L^\top y_\perp}_2\le \rho(A)\norm{y_\perp}_2.
\]
Substituting yields the claimed identity and bound.
\end{proof}

The theorem identifies $\rho(A)$ as the sharp idiosyncratic transmission coefficient of the ownership network in the linearized fire-sale system. When $A=ps^\top$, one has $L=0$, and every centered liquidation shock is annihilated after size normalization: only the market-wide component remains.

\subsection{Benchmark-relative active-variance capacity}

Let $R\in\R^m$ be a vector of realized excess returns, centered relative to market capitalization:
\[
s^\top R = \sum_{j=1}^m s_j R_j = 0.
\]
Investor $i$'s benchmark-relative active return is
\[
\alpha_i = \sum_{j=1}^m \left(\frac{A_{ij}}{p_i}-s_j\right)R_j.
\]
Equivalently,
\[
\alpha = D_p^{-1}(A-ps^\top)R.
\]
Define the aggregate active variance by
\[
V(\alpha)=\norm{\alpha}_{D_p}^2=\sum_{i=1}^n p_i\alpha_i^2.
\]
Let
\[
\tilde R=D_s^{1/2}R.
\]
Then $\tilde R\perp v$ because $s^\top R=0$.

\begin{theorem}[Active-variance capacity]\label{thm:alpha}
For every centered return vector $R\in\R^m$,
\[
V(\alpha)=\norm{L\tilde R}_2^2
\le
\rho(A)^2\sum_{j=1}^m s_j R_j^2.
\]
More generally, if $\tilde R$ is a random centered return vector with covariance matrix
\[
\Sigma=\mathbb E[\tilde R\tilde R^\top]
\qquad\text{satisfying}\qquad
\Sigma v=0,
\]
then
\[
\mathbb E[V(\alpha)] = \Tr(L\Sigma L^\top).
\]
In particular, under isotropic centered return dispersion,
\[
\Sigma = \sigma^2(I-vv^\top),
\]
one has
\[
\mathbb E[V(\alpha)] = \sigma^2 \Xdep(A).
\]
\end{theorem}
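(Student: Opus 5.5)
The plan is to rewrite the active return vector through the whitened residual $L$ and then apply the spectral facts from Theorems \ref{thm:contractive} and \ref{thm:spectral}. The proof has three steps.

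First, I will show that $V(\alpha)=\norm{L\tilde R}_2^2$. Since $\alpha=D_p^{-1}(A-ps^\top)R$, we have
\[
D_p^{1/2}\alpha
= D_p^{-1/2}(A-ps^\top)D_s^{-1/2}\,D_s^{1/2}R
= L\tilde R .
\]
Hence $V(\alpha)=\alpha^\top D_p\alpha=\norm{D_p^{1/2}\alpha}_2^2=\norm{L\tilde R}_2^2$. This step needs only the identity $D_p^{-1}=D_p^{-1/2}D_p^{-1/2}$; it does not even use centering.

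Second, I will obtain the worst-case bound from $\norm{L}_{op}=\rho(A)$. This operator-norm identity was already invoked in the proof of Theorem \ref{thm:firesale}. It holds because $L=K-uv^\top$ removes exactly the top singular pair $(1,u,v)$: the relations $Lv=0$ and $L^\top u=0$ from the proof of Theorem \ref{thm:spectral} show that the singular values of $L$ are $\sigma_2(K),\sigma_3(K),\dots$. Therefore
\[
\norm{L\tilde R}_2^2\le \rho(A)^2\norm{\tilde R}_2^2=\rho(A)^2\sum_j s_jR_j^2 .
\]
Centering is not needed for the inequality itself. The centering condition $\tilde R\perp v$ is what makes the bound meaningful, since $Lv=0$ means the component of $\tilde R$ along $v$ contributes nothing anyway.

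Third, I will handle the random case. Linearity of expectation and the cyclic property of the trace give
\[
\mathbb E\norm{L\tilde R}_2^2=\mathbb E\,\Tr(L\tilde R\tilde R^\top L^\top)=\Tr(L\Sigma L^\top).
\]
For $\Sigma=\sigma^2(I-vv^\top)$, we have $\Tr(L\Sigma L^\top)=\sigma^2\bigl(\Tr(LL^\top)-\norm{Lv}_2^2\bigr)$. Using $Lv=0$, this equals $\sigma^2\norm{L}_F^2$, which is $\sigma^2\Xdep(A)$ by Theorem \ref{thm:spectral}. The hypothesis $\Sigma v=0$ is consistent with centering but plays no essential role in the trace formula. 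I will also note that this particular $\Sigma$ does satisfy $\Sigma v=0$, because $\norm{v}_2=1$.

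None of these steps is a serious obstacle. The only point requiring care is the justification of $\norm{L}_{op}=\sigma_2(K)$. I would make this explicit by noting that $K=uv^\top+L$, with $u\perp\operatorname{col}(L)$ and $v\perp\operatorname{row}(L)$, is an orthogonal splitting of the singular value decomposition of $K$. In that splitting, the largest singular value of $L$ is the second singular value of $K$. The case $\rho(A)=0$ is consistent: then $L=0$ and both sides of the inequality vanish.
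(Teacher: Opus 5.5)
Your proposal is correct and follows the paper's proof essentially step for step. You use the identity $D_p^{1/2}\alpha=L\tilde R$, then the bound via $\norm{L}_{op}=\rho(A)$, which rests on the orthogonal splitting $K=uv^\top+L$ you describe combined with $\norm{K}_{op}\le 1$ from Theorem~\ref{thm:contractive}, and finally the trace computation $\mathbb E[V(\alpha)]=\Tr(L\Sigma L^\top)$, reduced to $\sigma^2\norm{L}_F^2=\sigma^2\Xdep(A)$ using $Lv=0$; your remarks that neither centering nor $\Sigma v=0$ is needed for these identities are accurate refinements rather than a different route.
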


\begin{proof}
Since
\[
L\tilde R = D_p^{-1/2}(A-ps^\top)D_s^{-1/2}D_s^{1/2}R
= D_p^{-1/2}(A-ps^\top)R,
\]
one gets
\[
\alpha = D_p^{-1/2}L\tilde R.
\]
Therefore
\[
V(\alpha)=\norm{D_p^{1/2}\alpha}_2^2=\norm{L\tilde R}_2^2.
\]
Because $\tilde R\perp v$ and the singular values of $L$ are $\sigma_2(K),\sigma_3(K),\dots$, one has $\norm{L}_{op}=\rho(A)$, so
\[
V(\alpha)\le \rho(A)^2\norm{\tilde R}_2^2=
\rho(A)^2\sum_{j=1}^m s_j R_j^2.
\]
For the covariance formula,
\[
\mathbb E[V(\alpha)]
=
\mathbb E[\tilde R^\top L^\top L\tilde R]
=
\Tr(L^\top L\Sigma)
=
\Tr(L\Sigma L^\top).
\]
If $\Sigma=\sigma^2(I-vv^\top)$, then $Lv=0$ implies
\[
L\Sigma L^\top = \sigma^2 L(I-vv^\top)L^\top = \sigma^2 LL^\top.
\]
Taking traces and using Theorem \ref{thm:spectral} yields
\[
\mathbb E[V(\alpha)] = \sigma^2\norm{L}_F^2 = \sigma^2\Xdep(A).
\]
\end{proof}

Theorem \ref{thm:alpha} gives a worst-case/average-case duality. The dominant mode $\rho(A)$ controls the maximum benchmark-relative dispersion generated by any centered return realization of a given weighted size, whereas $\Xdep(A)$ is the total quadratic capacity of the ownership system to generate benchmark-relative active variance under isotropic centered return dispersion.

\begin{remark}[Scope of the linearization]
The fire-sale and active-variance results are first-order statements. They do not require that real markets obey permanent linear impact or isotropic centered return geometry in full generality. The point is structural: once one linearizes around the proportional benchmark, the same residual operator $L$ necessarily governs the nontrivial transmission component.
\end{remark}

\section{Structural operations and comparative statics}\label{sec:operations}

This section records how the indices behave under canonical ownership operations.

\subsection{Investor mergers}

Suppose investors $a$ and $b$ merge into a single entity, and let $\tilde A$ be the new matrix obtained by summing rows $a$ and $b$.

\begin{proposition}[Mergers raise raw concentration but reduce dependence]\label{prop:merger}
Under an investor merger:
\begin{enumerate}[label={\rm(\roman*)},leftmargin=2em]
    \item investor concentration rises:
    \[
    \tilde H_I-H_I=2p_a p_b\ge 0;
    \]
    \item stock concentration is unchanged:
    \[
    \tilde H_S=H_S;
    \]
    \item raw micro concentration rises:
    \[
    M(\tilde A)-M(A)=2\sum_{j=1}^m A_{aj}A_{bj}\ge 0;
    \]
    \item benchmark-adjusted dependence falls:
    \[
    \Xdep(A)-\Xdep(\tilde A)
    =
    \frac{p_a p_b}{p_a+p_b}\sum_{j=1}^m \frac{(q_{aj}-q_{bj})^2}{s_j}\ge 0.
    \]
\end{enumerate}
Equality in (iii) holds if and only if the two investors hold disjoint sets of stocks, and equality in (iv) holds if and only if the two investors have identical portfolio weights.
\end{proposition}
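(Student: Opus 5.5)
The plan is to treat the merger as the row operation that replaces rows $a$ and $b$ of $A$ by their sum. The merged entity has size $\tilde p=p_a+p_b$, every other $p_i$ is unchanged, and every column sum $s_j$ is unchanged. Each item then follows from expanding the relevant quadratic quantity, and only the terms involving rows $a,b$ change.

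For (i), compute $\tilde H_I-H_I=(p_a+p_b)^2-p_a^2-p_b^2=2p_ap_b\ge 0$. Item (ii) is immediate because the column marginals $s$ are unchanged under row summation, so $\tilde H_S=H_S$. For (iii), expand cellwise:
\[
(A_{aj}+A_{bj})^2-A_{aj}^2-A_{bj}^2=2A_{aj}A_{bj},
\]
and summing over $j$ gives the stated difference. Every summand is nonnegative, so the sum vanishes if and only if $A_{aj}A_{bj}=0$ for all $j$, i.e.\ the supports of rows $a$ and $b$ are disjoint.

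For (iv), invoke Corollary \ref{cor:mergerX} directly. We need only check that the merged matrix $\tilde A$ is the object there: the merged row is $p_aq_a+p_bq_b$, its profile is $\tilde q$, and the column marginals coincide. The Corollary then gives the identity, and the equality case is part of that Corollary. I would nonetheless re-derive the equality case as follows. The weights $p_ap_b/(p_a+p_b)$ and $1/s_j$ are strictly positive under the standing assumption $p_i,s_j>0$, so the sum vanishes if and only if $q_{aj}=q_{bj}$ for all $j$.

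No step is a genuine obstacle. The only point needing care is the bookkeeping that $\Xdep(\tilde A)$ is computed with the unchanged $s$ and the new merged weight $p_a+p_b$. That is exactly the coarse partition with the single nontrivial block $\{a,b\}$ in Theorem \ref{thm:multiscale}: its between-group term equals $\Xdep(\tilde A)$, so the within-group term equals $\Xdep(A)-\Xdep(\tilde A)$. Finally, the within-block identity $p_a\norm{q_a-\tilde q}^2+p_b\norm{q_b-\tilde q}^2=\frac{p_ap_b}{p_a+p_b}\norm{q_a-q_b}^2$ follows by substituting
\[
q_a-\tilde q=\frac{p_b}{p_a+p_b}(q_a-q_b),\qquad q_b-\tilde q=-\frac{p_a}{p_a+p_b}(q_a-q_b),
\]
which gives coefficient $\frac{p_ap_b^2+p_bp_a^2}{(p_a+p_b)^2}=\frac{p_ap_b}{p_a+p_b}$.
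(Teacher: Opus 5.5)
Your proposal is correct and takes essentially the same route as the paper: direct expansion for (i)--(iii), and an appeal to Corollary~\ref{cor:mergerX} (Theorem~\ref{thm:multiscale} with the single nontrivial block $\{a,b\}$) for (iv). Your explicit equality-case arguments are details the paper leaves implicit. They rest on nonnegativity of each $A_{aj}A_{bj}$ and strict positivity of $p_ap_b/(p_a+p_b)$ and $1/s_j$.
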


\begin{proof}
The identities for $\tilde H_I$ and $\tilde H_S$ are immediate from the new marginals. For $M$,
\[
M(\tilde A)-M(A)
=
\sum_{j=1}^m\bigl[(A_{aj}+A_{bj})^2-A_{aj}^2-A_{bj}^2\bigr]
=
2\sum_{j=1}^m A_{aj}A_{bj}.
\]
Part (iv) follows from Corollary \ref{cor:mergerX}.
\end{proof}

Proposition \ref{prop:merger} cleanly separates two notions that are often conflated. Consolidation increases raw concentration because mass is being combined, but it decreases benchmark-adjusted dependence because profile heterogeneity is being averaged out. The wedge between $M$ and $\Xdep$ is therefore economically meaningful.

\subsection{Stock removal}

Suppose stock $j_0$ is removed and the remaining matrix is renormalized. Let
\[
w=1-s_{j_0},
\qquad
\hat A_{ij}=\frac{A_{ij}}{w}
\quad (j\ne j_0).
\]

\begin{proposition}[Stock removal and renormalization]\label{prop:removal}
After removing stock $j_0$ and renormalizing:
\begin{enumerate}[label={\rm(\roman*)},leftmargin=2em]
    \item the new investor marginals are
    \[
    \hat p_i=\frac{p_i-A_{ij_0}}{w},
    \]
    with investors of zero remaining wealth dropped from the active set;
    \item the new raw micro concentration is
    \[
    M(\hat A)=\frac{1}{w^2}\left(M(A)-\sum_{i=1}^n A_{ij_0}^2\right);
    \]
    \item writing $\varepsilon=s_{j_0}$,
    \[
    M(\hat A)=M(A)+2\varepsilon\, M(A)+O(\varepsilon^2)
    \qquad (\varepsilon\to 0).
    \]
\end{enumerate}
\end{proposition}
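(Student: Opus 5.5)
The plan is to compute directly from the definition $\hat A_{ij}=A_{ij}/w$ for $j\ne j_0$. Each of (i)--(iii) then reduces to a one-line identity plus a bound on the deleted column. First I check that the renormalization is legitimate. The total mass of the surviving columns is $\sum_{i}\sum_{j\ne j_0}A_{ij}=1-s_{j_0}=w$. So as long as $s_{j_0}<1$, which I assume (otherwise nothing remains), $\hat A$ is again a probability matrix.

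For (i), I sum row $i$ of $\hat A$ over $j\ne j_0$. This gives $\hat p_i=\frac1w\bigl(\sum_j A_{ij}-A_{ij_0}\bigr)=(p_i-A_{ij_0})/w$. Investors with $p_i=A_{ij_0}$ held only stock $j_0$, so $\hat p_i=0$; they are dropped from the active set, consistent with the standing convention of Section \ref{sec:dependence}. The stock marginals are likewise $\hat s_j=s_j/w$ for $j\ne j_0$, although that is not asked for.

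For (ii), I substitute into $M(\hat A)=\sum_i\sum_{j\ne j_0}A_{ij}^2/w^2$ and write the restricted sum as the full sum $M(A)$ minus the $j_0$ column contribution $\sum_i A_{ij_0}^2$. This is exactly the stated formula. Dropped zero rows contribute nothing, so they do not affect it.

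For (iii), which is the only step needing an estimate, I expand $w^{-2}=(1-\varepsilon)^{-2}=1+2\varepsilon+O(\varepsilon^2)$. The key observation is that the removed column is second order: $\sum_i A_{ij_0}^2\le\bigl(\sum_i A_{ij_0}\bigr)^2=\varepsilon^2$. Equivalently, by the stock-side case of Proposition \ref{prop:microdecomp}, this column sum is $s_{j_0}^2 d_{j_0}$ with $d_{j_0}\le 1$. Hence
\[
M(\hat A)=(1+2\varepsilon+O(\varepsilon^2))\bigl(M(A)-O(\varepsilon^2)\bigr)=M(A)+2\varepsilon M(A)+O(\varepsilon^2).
\]
The cross terms are absorbed because $0\le M(A)\le 1$, using Theorem \ref{thm:globalmicro} with $H_I\le 1$. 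That bound also makes the $O(\varepsilon^2)$ constant uniform in $A$, so the asymptotic statement holds whether $A$ is fixed or varies with $\varepsilon$.

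The main subtlety is this uniformity and the treatment of the deleted column. It is handled by the collision bound $\sum_i A_{ij_0}^2\le s_{j_0}^2$ rather than any finer structure.
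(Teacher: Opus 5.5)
Your proof is correct and follows the paper's argument: substitute $\hat A_{ij}=A_{ij}/w$ directly for (i) and (ii), and for (iii) combine the bound $\sum_i A_{ij_0}^2\le\varepsilon^2$ with the expansion $(1-\varepsilon)^{-2}=1+2\varepsilon+O(\varepsilon^2)$. Your extra remark that $M(A)\le 1$ makes the error term uniform in $A$ is a small refinement that the paper leaves implicit.
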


\begin{proof}
Only the last column is removed, and every remaining cell is divided by $w$. This yields (i) and (ii). For (iii), note that
\[
\sum_i A_{ij_0}^2\le \left(\sum_i A_{ij_0}\right)^2=\varepsilon^2
\]
and
\[
(1-\varepsilon)^{-2}=1+2\varepsilon+O(\varepsilon^2).
\]
\end{proof}

\subsection{Proportional dilution}

Now add a new investor of mass $\lambda\in(0,1)$ who holds the market portfolio. Define the $(n+1)\times m$ matrix
\[
A^{(\lambda)}_{i\cdot}=(1-\lambda)A_{i\cdot}\quad (i\le n),
\qquad
A^{(\lambda)}_{n+1,j}=\lambda s_j.
\]

\begin{proposition}[Dilution by a market-weight investor]\label{prop:dilution}
The ownership matrix $A^{(\lambda)}$ satisfies
\begin{align*}
H_I(A^{(\lambda)}) &= (1-\lambda)^2 H_I(A)+\lambda^2,\\
H_S(A^{(\lambda)}) &= H_S(A),\\
M(A^{(\lambda)})   &= (1-\lambda)^2 M(A)+\lambda^2 H_S(A),\\
\Xdep(A^{(\lambda)}) &= (1-\lambda)\,\Xdep(A).
\end{align*}
\end{proposition}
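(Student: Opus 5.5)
First I would read off the marginals of $A^{(\lambda)}$. For $i\le n$ the row sums are $(1-\lambda)p_i$, and the new row sums to $\lambda\sum_j s_j=\lambda$. The column sums are $(1-\lambda)s_j+\lambda s_j=s_j$. The first two identities then follow immediately: $H_I(A^{(\lambda)})=(1-\lambda)^2\sum_i p_i^2+\lambda^2$, and the stock marginal is unchanged, so $H_S$ is unchanged. For $M$, I would split the Frobenius sum by rows. The old rows contribute $(1-\lambda)^2M(A)$ and the new row contributes $\lambda^2\sum_j s_j^2=\lambda^2H_S(A)$. This is also a direct instance of Proposition~\ref{prop:microdecomp}, since the new investor has $c_{n+1}=H_S$.

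For $\Xdep$, I would use the row-profile decomposition of Proposition~\ref{prop:profiledecomp}, $\Xdep=\sum_i p_i\,\chi^2(q_i\|s)$. The key observation is that dilution leaves every portfolio profile unchanged. For $i\le n$, the new profile is $(1-\lambda)A_{ij}/((1-\lambda)p_i)=q_{ij}$, and the benchmark $s$ is also unchanged because the column marginals are preserved. The new investor's profile is exactly $s$, so its divergence $\chi^2(s\|s)=0$. The only effect is that each old weight $p_i$ is multiplied by $(1-\lambda)$. Hence
\[
\Xdep(A^{(\lambda)})=\sum_{i\le n}(1-\lambda)p_i\,\chi^2(q_i\|s)+\lambda\cdot 0=(1-\lambda)\Xdep(A).
\]
Alternatively, this is the multiscale law of Theorem~\ref{thm:multiscale} run in reverse, but the direct computation is cleaner.

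There is no real obstacle here. The only care needed is confirming that the active-support assumption ($p_i>0$, $s_j>0$) is preserved, which holds because $\lambda\in(0,1)$. One must also check that the benchmark used inside $\chi^2$ is the new column marginal, which coincides with $s$. As a sanity check, I would verify the $\Xdep$ identity via the form $\sum A_{ij}^2/(p_is_j)-1$. The old rows give $(1-\lambda)\sum A_{ij}^2/(p_is_j)$ and the new row gives $\lambda$. The total minus one is then $(1-\lambda)\bigl(\sum A_{ij}^2/(p_is_j)-1\bigr)$, which matches.
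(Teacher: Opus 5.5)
Your proof is correct and is essentially the paper's argument. The paper reads off the new marginals $((1-\lambda)p,\lambda)$ and $s$ to get $H_I$, $H_S$ and $M$ exactly as you do; for $\Xdep$ it computes cell by cell, noting that the new row coincides with the product benchmark and that each old cell contributes $\bigl((1-\lambda)A_{ij}-(1-\lambda)p_is_j\bigr)^2/\bigl((1-\lambda)p_is_j\bigr)$, which is your row-profile computation (unchanged profiles, weights scaled by $(1-\lambda)$, $\chi^2(s\|s)=0$) grouped by rows.
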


\begin{proof}
The new investor marginals are $(1-\lambda)p_1,\dots,(1-\lambda)p_n,\lambda$, while the stock marginals remain $s$. This yields the formulas for $H_I$ and $H_S$. The identity for $M$ follows by squaring the original rows and the new market-weight row. For $\Xdep$, the first $n$ rows are scaled by $(1-\lambda)$ relative to the new product benchmark, while the new row coincides exactly with that benchmark:
\[
\Xdep(A^{(\lambda)})
=
\sum_{i=1}^n\sum_{j=1}^m
\frac{\bigl((1-\lambda)A_{ij}-(1-\lambda)p_i s_j\bigr)^2}{(1-\lambda)p_i s_j}
=
(1-\lambda)\Xdep(A).
\]
\end{proof}

Proposition \ref{prop:dilution} shows why $\Xdep$ is the correct object for nonproportional overlap. Capital injected at exact market weights reduces benchmark-adjusted dependence linearly because it adds no new dependence of its own.

\section{Non-identification from the marginals}\label{sec:nonid}

The row and column marginals do not identify the ownership matrix.

\begin{proposition}[A fixed-marginal family with variable micro concentration]
For $t\in[0,1/2]$, define
\[
A(t)=
\begin{pmatrix}
t & \tfrac12-t\\[4pt]
\tfrac12-t & t
\end{pmatrix}.
\]
Then, for every $t$,
\[
p=s=(\tfrac12,\tfrac12),
\qquad
H_I=H_S=\tfrac12,
\]
but
\[
M(A(t))=\tfrac14+4\left(t-\tfrac14\right)^2,
\qquad
\Xdep(A(t))=16\left(t-\tfrac14\right)^2.
\]
\end{proposition}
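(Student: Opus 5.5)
The argument is a direct computation on the explicit family $A(t)$, and I would organize it in three short steps.

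\textbf{Step 1 (feasibility and marginals).} For $t\in[0,1/2]$ all four entries $t$ and $\tfrac12-t$ are nonnegative. Each row and each column contains one $t$ and one $\tfrac12-t$, so each sums to $\tfrac12$. Hence $p=s=(\tfrac12,\tfrac12)$ and $H_I=H_S=\tfrac14+\tfrac14=\tfrac12$, independently of $t$. This places the family inside $\T(p,s)$, and in fact it is the $2\times2$ transport family of Proposition \ref{prop:2x2} with $a=b=\tfrac12$, so that $I_{a,b}=[0,\tfrac12]$.

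\textbf{Step 2 (micro concentration).} By definition, $M(A(t))=2t^2+2(\tfrac12-t)^2$. I would substitute $t=\tfrac14+h$, so that $\tfrac12-t=\tfrac14-h$. This gives
\[
M(A(t))=2\bigl[(\tfrac14+h)^2+(\tfrac14-h)^2\bigr]=2\bigl[\tfrac18+2h^2\bigr]=\tfrac14+4h^2,
\]
which is the stated formula. As a consistency check, Proposition \ref{prop:2x2} gives the unconstrained minimizer $x^\star=\tfrac12-\tfrac14=\tfrac14$, which matches the vertex of this parabola.

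\textbf{Step 3 (dependence).} I would use the first form in Proposition \ref{prop:profiledecomp}, $\Xdep(A)=\sum_{i,j}A_{ij}^2/(p_is_j)-1$. Every product $p_is_j$ equals $\tfrac14$, so $\Xdep(A(t))=4M(A(t))-1=4(\tfrac14+4h^2)-1=16h^2$, as claimed. Alternatively, one can check directly: each residual $A_{ij}-\tfrac14$ equals $\pm h$, so the sum is $4\cdot h^2/(1/4)=16h^2$.

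There is no real obstacle here. The only point needing care is recentering at $t=\tfrac14$, which makes both expressions collapse immediately, and confirming the range $t\in[0,\tfrac12]$ so that $A(t)$ is a genuine ownership matrix.
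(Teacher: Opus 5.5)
Your proof is correct and matches the paper's direct computation: constant row and column sums of $\tfrac12$, expanding $2t^2+2(\tfrac12-t)^2$, and using $p_is_j=\tfrac14$ in every cell. Your main route $\Xdep(A(t))=4M(A(t))-1$ via Proposition \ref{prop:profiledecomp} is only a minor variant of the paper's residual sum $4\sum_{i,j}(A_{ij}(t)-\tfrac14)^2$, which you also give as your alternative check.
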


\begin{proof}
Every row and column sum equals $1/2$, so the marginals, and hence $H_I$ and $H_S$, are constant. Direct calculation gives
\[
M(A(t))=2t^2+2\left(\tfrac12-t\right)^2
       =\tfrac14+4\left(t-\tfrac14\right)^2.
\]
Since $p_i s_j=1/4$ for all cells,
\[
\Xdep(A(t))
=
4\sum_{i,j}\left(A_{ij}(t)-\tfrac14\right)^2
=
16\left(t-\tfrac14\right)^2.
\]
\end{proof}

\begin{corollary}[No scalar reconstruction from $(H_I,H_S)$]
There is no universal function $F$ such that $M(A)=F(H_I,H_S)$ for all ownership matrices $A$. Likewise, there is no universal function $G$ such that $\Xdep(A)=G(H_I,H_S)$.
\end{corollary}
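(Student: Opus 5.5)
The plan is a proof by contradiction using the fixed-marginal family $A(t)$ from the preceding proposition as the witness. Suppose a universal function $F$ existed with $M(A)=F(H_I,H_S)$ for every ownership matrix $A$. By that proposition, every member of the family has the same marginals $p=s=(\tfrac12,\tfrac12)$, so $H_I=H_S=\tfrac12$ throughout. The hypothesis then forces $M(A(t))=F(\tfrac12,\tfrac12)$ to be constant in $t\in[0,1/2]$.

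Next, I evaluate the closed form $M(A(t))=\tfrac14+4(t-\tfrac14)^2$ at two concrete parameters. At $t=\tfrac14$, where $A$ is the proportional benchmark $ps^\top$, it gives $\tfrac14$. At $t=0$, an anti-diagonal extreme point, it gives $\tfrac12$. These two values differ, which contradicts the constancy, so no such $F$ exists.

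The argument for $\Xdep$ is identical. The closed form $\Xdep(A(t))=16(t-\tfrac14)^2$ takes the value $0$ at $t=\tfrac14$ and $1$ at $t=0$, while $(H_I,H_S)$ is the same at both parameters. So no $G$ with $\Xdep(A)=G(H_I,H_S)$ can exist.

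There is essentially no obstacle here. The only point needing care is that both matrices have all marginals strictly positive, so $\Xdep$ is well defined under the standing assumption $p_i,s_j>0$. This holds because $p=s=(\tfrac12,\tfrac12)$ for every $t$, even at $t=0$ where some cells vanish.
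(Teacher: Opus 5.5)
Your proof is correct and follows the paper's argument exactly: the paper also uses the family $A(t)$, on which $(H_I,H_S)=(\tfrac12,\tfrac12)$ is constant while $M(A(t))$ and $\Xdep(A(t))$ vary with $t$. Your explicit evaluation at $t=\tfrac14$ and $t=0$, together with the check that the marginals stay strictly positive so $\Xdep$ is defined, simply makes the paper's one-line proof concrete.
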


\begin{proof}
In the family $A(t)$, the pair $(H_I,H_S)$ is constant, while both $M(A(t))$ and $\Xdep(A(t))$ vary with $t$.
\end{proof}

\begin{remark}[Copula viewpoint]
The non-identification problem is a discrete version of the classical question of recovering a joint law from its marginals. In the continuous setting, Sklar's theorem separates marginals from a copula, while Fr\'echet-type bounds constrain feasible joint laws \citep{sklar1959,frechet1951}. In the present framework, the matrix $A$ is the joint law, and $\Xdep(A)$ measures its departure from the product copula.
\end{remark}

\section{A worked example}\label{sec:example}

Consider the normalized ownership matrix
\[
A=
\begin{pmatrix}
0.30 & 0.10\\[3pt]
0.05 & 0.25\\[3pt]
0.15 & 0.15
\end{pmatrix}.
\]
Its entries sum to $1$.

\paragraph{Marginals.}
The investor marginals are
\[
p=(0.40,\,0.30,\,0.30),
\]
and the stock marginals are
\[
s=(0.50,\,0.50).
\]
Hence
\[
H_I=0.40^2+0.30^2+0.30^2=0.34,
\qquad
H_S=0.50^2+0.50^2=0.50.
\]
The corresponding effective counts are $N_I\approx 2.94$ and $N_S=2$.

\paragraph{Raw micro concentration.}
\[
M(A)=0.30^2+0.10^2+0.05^2+0.25^2+0.15^2+0.15^2=0.21,
\]
so $N_M=1/0.21\approx 4.76$.

\paragraph{Proportional benchmark and dependence.}
The proportional benchmark is
\[
ps^\top=
\begin{pmatrix}
0.20 & 0.20\\
0.15 & 0.15\\
0.15 & 0.15
\end{pmatrix},
\qquad
M(ps^\top)=H_I H_S=0.17.
\]
Thus the observed matrix is more concentrated at the cell level than the proportional benchmark.

Using Proposition \ref{prop:profiledecomp},
\[
\Xdep(A)=\sum_{i,j}\frac{A_{ij}^2}{p_i s_j}-1
        =\frac{0.09}{0.20}+\frac{0.01}{0.20}+\frac{0.0025}{0.15}+\frac{0.0625}{0.15}+\frac{0.0225}{0.15}+\frac{0.0225}{0.15}-1
        \approx 0.2333.
\]

\paragraph{Investor-level interpretation.}
The three investor profiles are
\[
q_1=(0.75,0.25),
\qquad
q_2=\left(\frac16,\frac56\right),
\qquad
q_3=(0.50,0.50).
\]
Their benchmark deviations from the market portfolio $s=(0.50,0.50)$ are
\[
\chi^2(q_1\|s)=0.25,
\qquad
\chi^2(q_2\|s)=\frac49\approx 0.4444,
\qquad
\chi^2(q_3\|s)=0.
\]
Therefore,
\[
\Xdep(A)=0.40(0.25)+0.30\!\left(\frac49\right)+0.30(0)=0.2333.
\]
Investor $2$ contributes the most to dependence because it is the most tilted away from market weights.

\paragraph{Fixed-marginal benchmarking.}
For these marginals, the fixed-marginal minimum and maximum of $M$ are
\[
M_{\min}=0.17,
\qquad
M_{\max}=0.30.
\]
Hence
\[
\Psi(A\mid p,s)=\frac{0.21-0.17}{0.30-0.17}\approx 0.308.
\]
Thus the observed matrix lies roughly one-third of the way from the fixed-marginal minimum to the fixed-marginal maximum of raw sparsity.

\paragraph{Spectral analysis.}
The whitened matrix is
\[
K=D_p^{-1/2}AD_s^{-1/2}
=
\begin{pmatrix}
0.6708 & 0.2236\\
0.1291 & 0.6455\\
0.3873 & 0.3873
\end{pmatrix},
\]
whose singular values are
\[
\sigma_1(K)=1,
\qquad
\sigma_2(K)\approx 0.4830.
\]
Therefore,
\[
\rho(A)=\sigma_2(K)\approx 0.4830,
\qquad
\rho(A)^2\approx 0.2333=\Xdep(A),
\]
which is exactly what Theorem \ref{thm:spectral} predicts in a $3\times2$ system.

\paragraph{Summary.}
The matrix has moderate investor concentration ($H_I=0.34$), balanced stocks ($H_S=0.50$), moderately elevated raw cell concentration ($M=0.21$), and nontrivial but not extreme benchmark-adjusted dependence ($\Xdep\approx 0.23$, $\rho\approx 0.48$). The diagnostics point to a single main source of specialization: investor $2$'s strong tilt toward stock $2$.

\section{Empirical implementation and research agenda}\label{sec:empirical}

The paper suggests that empirical ownership analysis should not rely on a single concentration number. A practical dashboard is
\[
(H_I,\ H_S,\ M,\ \Psi,\ \Xdep,\ \rho).
\]
Each object answers a different question:
\begin{align*}
H_I &: \text{How concentrated is aggregate wealth across investors?}\\
H_S &: \text{How concentrated is aggregate capitalization across stocks?}\\
M   &: \text{How concentrated are occupied investor--stock cells?}\\
\Psi &: \text{How sparse is the matrix relative to the sharp fixed-marginal range?}\\
\Xdep &: \text{How far is the matrix from proportional ownership?}\\
      &\text{How much total active-variance capacity does that create?}\\
\rho &: \text{What is the dominant nontrivial overlap mode?}\\
     &\text{How strong is worst-case idiosyncratic transmission?}
\end{align*}

A practical empirical workflow is straightforward. Normalize the holdings matrix to obtain $A$. Compute the marginals $(p,s)$ and the Herfindahl indices $(H_I,H_S)$. Then compute $M(A)$, $\Xdep(A)$, and the leading nontrivial singular value $\rho(A)$ from the whitened matrix. If fixed-marginal benchmarking is substantively relevant, solve the convex program $\min_{B\in\T(p,s)} M(B)$ and the companion maximization problem to obtain $\Psi(A\mid p,s)$.

The dashboard should be read in layers. A large value of $M(A)$ may reflect concentrated investors, concentrated stocks, or sparse assignment. A large value of $\Xdep(A)$ can only come from nonproportional ownership profiles. A large value of $\Psi(A\mid p,s)$ means the matrix is concentrated relative to what the marginals permit, even if $\Xdep(A)$ is modest. The three measures therefore diagnose distinct failure modes of diversification.

Through Theorems \ref{thm:firesale} and \ref{thm:alpha}, the same dashboard also acquires dynamic content. The quantity $\rho(A)$ is the sharp coefficient mapping weighted idiosyncratic liquidation shocks into weighted cross-sectional price pressure in the linearized fire-sale system. The quantity $\Xdep(A)$ is the isotropic average-case capacity of the ownership system to transform centered stock-return dispersion into benchmark-relative investor dispersion. In that sense, the paper's static overlap measures become dynamic transmission coefficients once a linear mechanism is specified.

The framework also fits naturally with the systemic-risk literature. Concentrated ownership can make prices fragile when large holders face correlated liquidity shocks \citep{greenwoodthesmar2011}. Overlapping portfolios create indirect network linkages that amplify fire-sale dynamics \citep{greenwoodlandierthesmar2015,poledna2021}. Statistically validated portfolio-overlap networks extend the same idea to network inference \citep{gualdi2016}. The point of the present theory is not to replace those models, but to provide a measurement layer that isolates scale concentration, feasible sparsity, nonproportional overlap, and the leading linear transmission coefficients before one imposes a richer contagion mechanism.

\begin{remark}[Computational cost]
For an $n\times m$ holdings matrix, computing $(p,s)$, $(H_I,H_S)$, $M$, and $\Xdep$ is $O(nm)$. The singular value decomposition of the whitened matrix costs $O(nm\min\{n,m\})$, but in large sparse panels the dominant quantity $\rho(A)$ is obtainable by truncated methods. The fixed-marginal minimization of $M$ is a strictly convex quadratic program with $nm$ variables and $n+m$ equality constraints; the maximization can be approached through extreme-point search or specialized transportation-polytope algorithms.
\end{remark}

\section{Conclusion}

Ownership concentration is a layered property of a bipartite system. The first layer is investor concentration, measured by the row-marginal Herfindahl index. The second is stock concentration, measured by the column-marginal Herfindahl index. The third is benchmark-adjusted dependence, measured by the deviation of the joint ownership matrix from the proportional benchmark. Raw micro concentration belongs in the story, but it is not a substitute for any one of those layers.

The mathematical structure is sharp. Raw micro concentration admits exact row and column decompositions. Benchmark-adjusted dependence admits exact profile decompositions and a multiscale aggregation law. For fixed marginals, minimum raw concentration is unique, whereas maximum raw concentration can be attained at a sparse extreme point whose support graph is a forest. Spectrally, the entire nontrivial dependence structure is encoded in the singular values of the whitened matrix after the trivial rank-one mode is removed.

The framework also bridges static topology and dynamic transmission. In the linearized fire-sale system, the dominant nontrivial singular value $\rho(A)$ is the exact worst-case coefficient for idiosyncratic shock propagation after size normalization. In the benchmark-relative return system, $\rho(A)^2$ is the worst-case active-variance multiplier and $\Xdep(A)$ is the isotropic average-case active-variance capacity. The same residual operator $L$ therefore governs both liquidation spillovers and cross-sectional alpha dispersion. That duality is the paper's strongest conceptual point: overlap is not merely a descriptive pattern in the holdings matrix; it is a transmission mechanism.

The conceptual payoff is that several quantities that look superficially similar answer different questions. $M(A)$ measures cell concentration. $\Psi(A\mid p,s)$ measures concentration relative to the sharp feasible range implied by the marginals. $\Xdep(A)$ measures departure from proportional ownership and total quadratic active-variance capacity. $\rho(A)$ identifies the strongest nontrivial overlap mode and the sharp idiosyncratic transmission coefficient. Treating these as interchangeable obscures both the geometry and the economics.

\paragraph{Signed ownership and leverage.}
A natural next step is to admit short positions. Write the raw exposure matrix as $Y=Y^+-Y^-$ with $Y^\pm\ge 0$ and $Y^+_{ij}Y^-_{ij}=0$, and normalize by gross exposure:
\[
G=\sum_{i,j}(Y^+_{ij}+Y^-_{ij})>0,
\qquad
A^\pm=\frac{Y^\pm}{G},
\qquad
A=A^+-A^-.
\]
Then $\norm{A}_1=1$. Define the gross marginals
\[
p_i^{\mathrm g}=\sum_j (A_{ij}^+ + A_{ij}^-),
\qquad
s_j^{\mathrm g}=\sum_i (A_{ij}^+ + A_{ij}^-),
\]
and the net marginals
\[
p_i^{\mathrm n}=\sum_j A_{ij},
\qquad
s_j^{\mathrm n}=\sum_i A_{ij}.
\]
The quadratic micro concentration $\norm{A}_F^2$ remains well defined, but the raw signed fixed-net-marginal problem is unbounded unless gross exposure is fixed or leverage is capped. After gross normalization one obtains a bounded mixed-sign feasible set, but its geometry is no longer the standard transportation polytope. When the aggregate net exposure
\[
\eta=\sum_i p_i^{\mathrm n}=\sum_j s_j^{\mathrm n}
\]
is nonzero, the canonical rank-one net benchmark is
\[
B=\eta^{-1}p^{\mathrm n}(s^{\mathrm n})^\top,
\]
which shares the net marginals of $A$. A gross-whitened signed dependence measure is then
\[
\Xdep_{\mathrm{signed}}(A)
=
\sum_{i,j}\frac{(A_{ij}-B_{ij})^2}{p_i^{\mathrm g}s_j^{\mathrm g}}
=
\bigl\|(D_{p^{\mathrm g}})^{-1/2}(A-B)(D_{s^{\mathrm g}})^{-1/2}\bigr\|_F^2.
\]
Market-neutral systems with $\eta=0$ require a genuinely two-sided long/short benchmark rather than a single rank-one net benchmark. Developing the corresponding mixed-sign transport geometry is a substantial extension, but the gross-whitened formulation above provides the right quadratic starting point.

The framework is deliberately first-order. It does not model endogenous liquidation choice, nonlinear price impact, control rights, derivatives, or equilibrium feedback. Its role is more basic: given a holdings matrix, it identifies what part of observed concentration comes from investor size, stock size, feasible sparsity, genuine dependence in ownership profiles, and the leading linear transmission channels. That decomposition is a prerequisite for a more granular empirical science of crowding, fragility, and ownership-based systemic risk.

\appendix

\section{Rényi sensitivity analysis}\label{app:renyi}

The main text focuses on the quadratic case because it uniquely supports the collision-probability, Frobenius-norm, and $\chi^2$-based spectral representations above. The broader Rényi family is nevertheless useful as a robustness device.

\begin{definition}[Rényi concentrations]
For a probability vector $w=(w_1,\dots,w_d)$ and $\alpha>0$, $\alpha\neq 1$, define
\[
C_\alpha(w)=\sum_{k=1}^d w_k^\alpha,
\qquad
N^{(\alpha)}(w)=C_\alpha(w)^{1/(1-\alpha)}.
\]
For the ownership matrix,
\[
H_I^{(\alpha)}=C_\alpha(p),
\qquad
H_S^{(\alpha)}=C_\alpha(s),
\qquad
M^{(\alpha)}(A)=\sum_{i,j}A_{ij}^\alpha.
\]
\end{definition}

\begin{proposition}
Under proportional ownership $A=ps^\top$,
\[
M^{(\alpha)}(A)=H_I^{(\alpha)}H_S^{(\alpha)}
\qquad\text{for every }\alpha>0,\ \alpha\neq 1.
\]
Equivalently,
\[
N_M^{(\alpha)}=N_I^{(\alpha)}N_S^{(\alpha)}.
\]
\end{proposition}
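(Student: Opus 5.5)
The plan is a direct factorization of the double sum, followed by exponentiation to pass to effective numbers.

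\emph{Factorization.} Under $A=ps^\top$ every cell is $A_{ij}=p_is_j$, so $A_{ij}^\alpha=p_i^\alpha s_j^\alpha$. This uses only that $p_i,s_j\ge 0$ and that real powers of nonnegative reals are multiplicative; for $0<\alpha<1$ we take the convention $0^\alpha=0$ on any inactive cells. The double sum then separates:
\[
M^{(\alpha)}(ps^\top)=\sum_{i,j}p_i^\alpha s_j^\alpha=\Bigl(\sum_i p_i^\alpha\Bigr)\Bigl(\sum_j s_j^\alpha\Bigr)=H_I^{(\alpha)}H_S^{(\alpha)}.
\]
This is the same computation the paper already used for $\alpha=2$ in Section~\ref{sec:dependence}, where $M(ps^\top)=H_IH_S$.

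\emph{Effective numbers.} We define $N_M^{(\alpha)}:=M^{(\alpha)}(A)^{1/(1-\alpha)}$, consistent with $N^{(\alpha)}$ applied to $A$ viewed as a probability vector on the cells. Since $\alpha\neq 1$, the exponent $1/(1-\alpha)$ is finite. All three quantities $M^{(\alpha)}(A)$, $H_I^{(\alpha)}$ and $H_S^{(\alpha)}$ are strictly positive because $p$ and $s$ are probability vectors. Raising the identity from the first step to the power $1/(1-\alpha)$ and using $(xy)^c=x^cy^c$ for $x,y>0$ therefore gives $N_M^{(\alpha)}=N_I^{(\alpha)}N_S^{(\alpha)}$.

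\emph{Main obstacle.} There is no real difficulty. The only points to state explicitly are the positivity check that justifies the power map, and the handling of zero cells when $\alpha<1$.
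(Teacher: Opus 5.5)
Your proof is correct and follows the paper's argument: factor $\sum_{i,j}(p_is_j)^\alpha$ into $\bigl(\sum_i p_i^\alpha\bigr)\bigl(\sum_j s_j^\alpha\bigr)$. Your extra step of raising both sides to the power $1/(1-\alpha)$, justified by positivity, spells out the effective-number equivalence that the paper only asserts.
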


\begin{proof}
If $A_{ij}=p_i s_j$, then
\[
M^{(\alpha)}(A)=\sum_{i,j}(p_i s_j)^\alpha
=\left(\sum_i p_i^\alpha\right)\left(\sum_j s_j^\alpha\right)
=H_I^{(\alpha)}H_S^{(\alpha)}.
\]
\end{proof}

\begin{remark}[Why the quadratic case remains central]
The order $\alpha=2$ is distinguished for four reasons. First, it yields the Frobenius representation $M(A)=\norm{A}_F^2$, which connects directly to the singular values. Second, it is exactly the collision probability of two independent draws. Third, the benchmark-adjusted dependence measure $\Xdep(A)$ is a Pearson $\chi^2$ object, the natural normalized quadratic companion to $M(A)$. Fourth, the dynamic transmission theorems of Section \ref{sec:dynamic} are genuinely quadratic: fire-sale severity and aggregate active variance are both squared-norm objects generated by the same residual operator. The Rényi family is therefore best viewed as a sensitivity layer around this fundamentally quadratic core.
\end{remark}

\bibliographystyle{plainnat}
\bibliography{ownership_concentration_unified}

\end{document}